\newcommand{\CH}[1]{\text{$CH(#1)$}}
\newcommand{\require}{\textbf{Input: }}
\newcommand{\ensure}{\textbf{Output: }}
\newcommand{\ti}{\frac{3+4\pi}{3}}
\newcommand{\dist}[2]{|#1#2|}
\newcommand{\distG}[3]{|#1#2|_{#3}}
\newcommand{\TC}{C_1}
\newcommand{\BC}{C_2}
\newcommand{\linepq}[2]{\ell(#1,#2)}
\newcommand{\ray}[2]{R({#1}{\rightarrow}{#2})}
\newcommand{\PS}{\textsf{\sc Deg3PlaneSpanner}}
\newcommand{\CS}{\textsf{\sc Matching}}
\newcommand{\lune}[2]{L(#1,#2)}
\newcommand{\disk}[2]{D(#1,#2)}
\newcommand{\pathG}[3]{\delta_{#3}(#1,#2)}
\newcommand{\lattice}{\Lambda}
\newcommand{\rwf}[1]{\wideparen{#1}}
\title{Towards Plane Spanners of Degree 3}
\author{
Ahmad Biniaz\thanks{Carleton University, Canada. Supported by NSERC.}
\and
Prosenjit Bose\footnotemark[2]
\and
Jean-Lou De Carufel\thanks{University of Ottawa, Canada. Supported by NSERC.}
\and
Cyril Gavoille\thanks{University of Bordeaux, France. Supported by the ANR DESCARTES Project.}
\and
Anil Maheshwari\footnotemark[2]
\and
Michiel Smid\footnotemark[2]
}
\date{\today}
\newtheorem{lemma}{Lemma}
\newtheorem{theorem}{Theorem}
\newtheorem{observation}{Observation}
\newtheorem*{problem*}{Problem}
\begin{document}

\maketitle
\begin{abstract}
Let $S$ be a finite set of points in the plane. In this paper we consider the problem of computing plane spanners of degree at most three for $S$.
\begin{enumerate}
 \item If $S$ is in convex position, then we present an algorithm that constructs a plane $\ti$-spanner for $S$ whose vertex degree is at most 3. 
\item If $S$ is the vertex set of a non-uniform rectangular lattice, then we present an algorithm that constructs a plane $3\sqrt{2}$-spanner for $S$ whose vertex degree is at most 3. 
\item If $S$ is in general position, then we show how to compute plane degree-3 spanners for $S$ with a linear number of Steiner points.
\end{enumerate}
\end{abstract}

\section{Introduction}
\label{introduction-section}

Let $S$ be a finite set of points in the plane. A \emph{geometric graph} is a graph $G=(S,E)$ with vertex set $S$ and edge set $E$ consisting of line segments connecting pairs of vertices. The {\em length} (or {\em weight}) of any edge $(p,q)$ in $E$ is defined to be the Euclidean distance $\dist{p}{q}$ between $p$ and $q$. The {\em length} of any path in $G$ is defined to be the sum of the lengths of the edges on this path. For any two vertices $p$ and $q$ of $S$, their {\em shortest-path distance} in $G$, denoted by $\distG{p}{q}{G}$, is a minimum length of any path in $G$ between $p$ 
and $q$. For a real number $t \geqslant 1$, the graph $G$ is a 
$t$-\emph{spanner} of $S$ if for any two points $p$ and $q$ in $S$, $\distG{p}{q}{G}\leq t\dist{p}{q}$. The smallest value of $t$ for which $G$ is a $t$-spanner is called the \emph{stretch factor} of $G$. A large number of algorithms have been proposed for constructing $t$-spanners for any given point set; see the book by Narasimhan and Smid~\cite{ns-gsn-07}. 

The {\em degree} of a spanner is defined to be its maximum vertex degree. Note that 3 is a lower bound on the degree of a $t$-spanner, for every constant $t>1$, because a Hamiltonian path through a set of points arranged in a grid has unbounded stretch factor (see~\cite{ns-gsn-07} for more details). Even for points that are in convex position, 3 is a lower bound on the degree of a spanner (see Kanj \emph{et al.}~\cite{Kanj2016}).\footnote{It can be shown that the stretch factor is $\Omega(\sqrt{n})$.} Salowe~\cite{Salowe1994} proved the existence of spanners of degree 4. Das and Heffernan~\cite{Das1996} showed
the existence of spanners of degree 3.

A \emph{plane} spanner is a spanner whose edges do not cross each other. Chew~\cite{c-tipga-86} 
was the first to prove that plane spanners exist. Chew proved that the $L_1$-Delaunay triangulation of a finite point set has stretch factor at most $\sqrt{10}\approx 3.16$ (observe that lengths in this graph are measured in the Euclidean metric). In the journal version~\cite{c-tapga-89}, Chew proves that the Delaunay triangulation based on a convex distance function defined by an equilateral triangle is a $2$-spanner.
Dobkin \emph{et al.}~\cite{dfs-dgaag-90} proved that the $L_2$-Delaunay triangulation is a $t$-spanner for $t = \frac{\pi (1+\sqrt{5})}{2}\allowbreak\approx\allowbreak 5.08$. Keil and Gutwin~\cite{kg-cgwac-92} improved the upper bound on the stretch factor to $t = \frac{4\pi}{3\sqrt{3}}\allowbreak\approx\allowbreak 2.42$. This was subsequently improved by Cui \emph{et al.}~\cite{ckx-sfdtpcp-11} to $t=2.33$ for the case when the point set is in convex position. Currently, the best result is due to Xia~\cite{x-sfdtl-13}, who proved that $t$ is less than $1.998$. 
For points that are in convex position the current best upper bound on the stretch factor of plane spanners is $1.88$ that was obtained by Amani {\em et al.}~\cite{Amani2016}. Regarding lower bounds, by considering the four vertices of a square, it is obvious that a plane $t$-spanner with $t<\sqrt{2}$ does not exist. Mulzer~\cite{m-mdtrng-04} has shown that every plane spanning graph of the 
vertices of a regular $21$-gon has stretch factor at least $1.41611$. Recently, Dumitrescu and Ghosh~\cite{Dumitrescu2016b} improved the lower bound to $1.4308$ for the vertices of a regular $23$-gon.

The problem of constructing bounded-degree spanners that are plane and have small stretch factor has received considerable
attention (e.g., see~\cite{Bonichon2010, Bonichon2015, Bose2012, Bose2005, Bose2009, Kanj2008, Kanj2016, Li2004}).
Bonichon \emph{et al.}~\cite{Bonichon2015} proved the existence of a degree 4 plane spanner with stretch factor $156.82$. A simpler algorithm by Kanj \emph{et al.}~\cite{Kanj2016} constructs a degree 4 plane spanner with stretch factor 20; for points that are in convex position, this algorithm gives a plane spanner of degree at most 3 with the same stretch factor. Dumitrescu and Ghosh~\cite{Dumitrescu2016a} considered plane spanners for uniform grids. For the infinite uniform square grid, they proved the existence of a plane spanner of degree 3 whose stretch factor is at most $2.607$; the lower bound is $1+\sqrt{2}$.  

In this paper we consider bounded-degree plane spanners. In Section~\ref{convex-section} we present an algorithm that computes a plane $\frac{3+4\pi}{3}\approx 5.189$-spanner of degree 3 for points in convex position. In Section~\ref{grid-section} we consider finite non-uniform rectangular grids; we present an algorithm that computes a degree 3 plane spanner whose stretch factor is at most $3\sqrt{2}\approx 4.25$. In Section~\ref{Steiner-section} we show that any plane $t$-spanner for points in the plane that are in general position can be converted to a plane $(t+\epsilon)$-spanner of degree at most 3 that uses a linear number of Steiner points, where $\epsilon > 0$ is an arbitrary small constant.

\section{Preliminaries}
\label{preliminaries-section}
For any two points $p$ and $q$ in the plane, let $pq$ denote the line segment between $p$ and $q$, let $\linepq{p}{q}$ denote the line passing through $p$ and $q$, let $\ray{p}{q}$ denote the ray emanating from $p$ and passing through $q$, and let $\disk{p}{q}$ denote the closed disk that has $pq$ as a diameter. Moreover, let $L(p,q)$ denote the lune of $p$ and $q$, which is the intersection of the two closed disks of radius $|pq|$ that are centered at $p$ and $q$.

Let $S$ be a finite and non-empty set of points in the plane. We denote by $\CH{S}$ the boundary of the convex hull of $S$. The {\em diameter} of $S$ is the largest distance among the distances between all pairs of points of $S$. Any pair of points whose distance is equal to the diameter is called a {\em diametral pair}. Any point of any diametral pair of $S$ is called a {\em diametral point}. A {\em chain} is a sequence of points together with line segments connecting every pair of consecutive vertices.

\begin{observation}
\label{lune-obs}
Let $S$ be a finite set of at least two points in the plane, and let $\{p,q\}$ be any diametral pair of $S$. Then, the points of $S$ lie in $L(p,q)$.
\end{observation}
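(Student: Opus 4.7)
The plan is to unpack the definitions of diameter and lune and show the containment by a one-line distance estimate for each point $r \in S$.

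First I would fix an arbitrary point $r \in S$ and observe that, since $\{p,q\}$ is a diametral pair, the quantity $|pq|$ equals the diameter of $S$, i.e.\ the maximum pairwise distance. In particular, this maximality gives both $|pr| \leqslant |pq|$ and $|qr| \leqslant |pq|$.

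Next I would translate these two inequalities into containment in the two disks defining the lune. The inequality $|pr| \leqslant |pq|$ says that $r$ lies in the closed disk of radius $|pq|$ centered at $p$, and $|qr| \leqslant |pq|$ says that $r$ lies in the closed disk of radius $|pq|$ centered at $q$. By the definition of $L(p,q)$ as the intersection of these two disks, we conclude $r \in L(p,q)$. Since $r \in S$ was arbitrary, the claim follows.

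There is no real obstacle here; the statement is essentially a direct consequence of the definitions, and the only thing to be careful about is to not conflate the two different disk notations used in the Preliminaries (the diametral disk $D(p,q)$ of radius $|pq|/2$ versus the two disks of radius $|pq|$ involved in the lune). Once one uses the correct disks of radius $|pq|$, the proof is a single application of the definition of diameter.
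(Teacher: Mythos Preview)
Your proof is correct and is exactly the natural argument: each point of $S$ lies within distance $|pq|$ of both $p$ and $q$ by maximality of the diameter, hence lies in both disks defining $L(p,q)$. The paper states this as an observation without proof, so there is nothing further to compare; your remark about not confusing $D(p,q)$ with the radius-$|pq|$ disks is a reasonable caution but not something the paper addresses.
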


The following theorem is a restatement of Theorem~7.11 in~\cite{Benson1966}.

\begin{theorem}[See~\cite{Benson1966}]
\label{Benson-thr}
If $C_1$ and $C_2$ are convex polygonal regions with $C_1\subseteq C_2$, then the length of the boundary of $C_1$ is at most the length of the boundary of $C_2$. 
\end{theorem}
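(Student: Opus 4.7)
The plan is to reduce the claim to a single elementary cutting operation and then iterate it. The key lemma is the following: if $K$ is a convex polygon and $H$ is a closed half-plane, then $K\cap H$ is a convex polygon whose boundary has length at most that of $\partial K$. Granting this lemma, the theorem follows easily. Since $C_1$ is convex and polygonal, it can be written as $C_1 = H_1\cap H_2\cap\dots\cap H_k$, where each $H_i$ is a closed half-plane bounded by the supporting line of one of the edges of $C_1$ and containing $C_1$. Because $C_1\subseteq C_2$, setting $K_0=C_2$ and $K_i = K_{i-1}\cap H_i$ gives a decreasing sequence of convex polygons with $K_k = C_1$. Applying the lemma at each step yields
\[
\text{length}(\partial C_1) = \text{length}(\partial K_k) \leqslant \text{length}(\partial K_{k-1}) \leqslant \dots \leqslant \text{length}(\partial K_0) = \text{length}(\partial C_2),
\]
which is the desired inequality.

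To prove the key lemma, let $\ell$ be the line bounding $H$. If $\ell$ does not meet the interior of $K$, then $K\cap H$ is either $K$ or empty (or at most a single edge/point), and the inequality is immediate. Otherwise $\ell$ crosses $\partial K$ in exactly two points $a$ and $b$, splitting $\partial K$ into two polygonal sub-chains $\gamma_1$ and $\gamma_2$ from $a$ to $b$. The boundary of $K\cap H$ consists of exactly one of these sub-chains, say $\gamma_1$, together with the segment $ab$. A repeated application of the triangle inequality to the vertices of $\gamma_2$ gives $|ab| \leqslant \text{length}(\gamma_2)$, so
\[
\text{length}(\partial(K\cap H)) = \text{length}(\gamma_1) + |ab| \leqslant \text{length}(\gamma_1) + \text{length}(\gamma_2) = \text{length}(\partial K).
\]

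The main obstacle I anticipate is not the main argument but the bookkeeping of degenerate cases: $\ell$ passing through a vertex of $K$ rather than cutting through the interiors of two edges, the chord $ab$ coinciding with an existing edge of $K$, or the polygon $C_1$ degenerating to a segment or a point (in which case its ``boundary length'' is either twice its length or zero, and the inequality is trivial). These do not change the argument but require writing ``at most two intersection points'' instead of ``exactly two'' and treating the boundary of $K\cap H$ as a possibly degenerate polygon. Once these cases are handled uniformly by the triangle inequality, the proof is complete.
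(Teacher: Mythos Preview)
Your argument is correct and is in fact the standard proof of this classical inequality: write the inner polygon as a finite intersection of half-planes, cut the outer polygon by each half-plane in turn, and observe that each cut replaces a polygonal sub-chain of the boundary by its chord, which can only shorten the perimeter by the triangle inequality. The degenerate cases you flag are handled exactly as you describe.

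There is nothing to compare against here: the paper does not give its own proof of this statement but merely cites it as Theorem~7.11 of Benson~\cite{Benson1966}. Your write-up supplies what the paper omits.
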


By simple calculations, one can verify the correctness of the following lemma; however, you can find a proof of it in~\cite{Amani2016}.
\begin{lemma}
 \label{triangle-spanner}
Let $a$, $b$, and $c$ be three points in the plane, and let $\beta= \angle abc$. Then, $$\frac{|ab|+|bc|}{|ac|}\leqslant \frac{1}{\sin(\beta/2)}.$$
\end{lemma}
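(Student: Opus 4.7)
The plan is to fix the segment $ac$ and the angle $\beta=\angle abc$, and then determine the maximum possible value of $|ab|+|bc|$ as $b$ ranges over all admissible positions. By the inscribed angle theorem the locus of such $b$ is a circular arc, so I can invoke the law of sines: writing $R$ for the circumradius of triangle $abc$, one has $|ac|=2R\sin\beta$, $|ab|=2R\sin(\angle acb)$, and $|bc|=2R\sin(\angle bac)$.

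Let $\gamma=\angle acb$, so that $\angle bac=\pi-\beta-\gamma$ and therefore $|bc|=2R\sin(\beta+\gamma)$. Applying the sum-to-product identity gives
\[
|ab|+|bc|=2R\bigl(\sin\gamma+\sin(\beta+\gamma)\bigr)=4R\sin\!\left(\gamma+\tfrac{\beta}{2}\right)\cos\!\left(\tfrac{\beta}{2}\right)\leq 4R\cos\!\left(\tfrac{\beta}{2}\right).
\]
Substituting $R=|ac|/(2\sin\beta)$ and using $\sin\beta=2\sin(\beta/2)\cos(\beta/2)$, the right-hand side simplifies to $|ac|/\sin(\beta/2)$, which is exactly the claim. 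Equality holds when $\gamma+\beta/2=\pi/2$, i.e.\ when the triangle is isoceles with $|ab|=|bc|$.

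A purely algebraic alternative, should one prefer it, is to let $x=|ab|$, $y=|bc|$ and apply the law of cosines $|ac|^2=x^2+y^2-2xy\cos\beta$. After squaring the desired inequality and invoking $\cos\beta=1-2\sin^2(\beta/2)$, the statement collapses to $(x-y)^2\sin^2(\beta/2)\leq (x-y)^2$, which is immediate from $\sin^2(\beta/2)\leq 1$.

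There is no real obstacle here: the lemma is pure trigonometry. The only mild care needed is to ensure the parameterization covers all legitimate configurations of $b$ (the range $\gamma\in(0,\pi-\beta)$), but since the bound on $\sin(\gamma+\beta/2)$ is a uniform $\leq 1$, this does not affect the argument.
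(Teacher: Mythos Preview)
Your proof is correct; both the law-of-sines argument and the algebraic alternative are clean and complete. The paper itself does not supply a proof of this lemma---it simply remarks that ``by simple calculations, one can verify the correctness'' and points to~\cite{Amani2016}---so there is nothing substantive to compare against. Your algebraic route via the law of cosines is particularly tidy and avoids the (minor) need to treat degenerate collinear configurations separately.
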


\begin{lemma}
\label{angle-in-lune-lemma}
 Let $a$ and $b$ be two points in the plane. Let $c$ be a point that is on the boundary or in the interior of $\lune{a}{b}$. Then, $\angle acb\geqslant \frac{\pi}{3}$.
\end{lemma}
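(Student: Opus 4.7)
The plan is to apply the law of cosines to triangle $acb$ and reduce the geometric claim to a one-variable arithmetic inequality. Writing $x=|ac|$, $y=|bc|$, $d=|ab|$, the law of cosines gives
\[
\cos(\angle acb)=\frac{x^{2}+y^{2}-d^{2}}{2xy},
\]
so showing $\angle acb\geq\pi/3$ is equivalent to proving $\cos(\angle acb)\leq 1/2$, which after clearing the positive denominator becomes the inequality
\[
x^{2}-xy+y^{2}\leq d^{2}.
\]

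The hypothesis that $c$ lies in $\lune{a}{b}=\disk{a}{|ab|}\cap\disk{b}{|ab|}$ translates directly into the two bounds $x\leq d$ and $y\leq d$, so the problem is reduced to establishing the displayed inequality from these bounds alone. This is a short estimate: after assuming without loss of generality $x\geq y\geq 0$, the term $y(y-x)$ is non-positive, so $x^{2}-xy+y^{2}=x^{2}+y(y-x)\leq x^{2}\leq d^{2}$, and we are done.

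Degenerate configurations should be dismissed at the outset: if $c\in\{a,b\}$ the angle is not defined and the statement is vacuous, while if $c$ lies on the segment $ab$ then $\angle acb=\pi\geq\pi/3$ trivially. Otherwise the triangle $acb$ is non-degenerate and the calculation above applies.

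There is no real obstacle — the entire argument is a routine application of the law of cosines — but it is worth noting as a sanity check that the bound is tight: the two points where the boundaries of $\disk{a}{|ab|}$ and $\disk{b}{|ab|}$ meet are precisely the apexes of equilateral triangles with base $ab$, so at those boundary points $\angle acb=\pi/3$ exactly, confirming both the correctness and the sharpness of the inequality.
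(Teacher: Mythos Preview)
Your proof is correct. The law-of-cosines reduction is valid, the algebraic inequality $x^{2}-xy+y^{2}\leq d^{2}$ follows cleanly from $x,y\leq d$ via the factoring you give, and the degenerate cases are handled properly. (One notational quibble: in this paper $\disk{p}{q}$ denotes the disk with \emph{diameter} $pq$, not the disk centred at $p$ with radius $|pq|$, so your description of $\lune{a}{b}$ uses notation inconsistently with the paper; the intended meaning is clear, however.)

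The paper takes a different and more qualitative route: from $|ca|\leq|ab|$ and $|cb|\leq|ab|$ it observes directly that $ab$ is a longest side of $\triangle abc$, so $\angle acb$ is a largest interior angle, and since the three angles sum to $\pi$ the largest one must be at least $\pi/3$. This avoids any computation and makes the geometric reason transparent. Your approach, by contrast, is purely analytic and has the advantage that it immediately exhibits the tightness at the equilateral configuration and could in principle be sharpened to quantitative bounds if the hypotheses were relaxed. Both arguments are short; the paper's is simply the ``largest side faces largest angle'' one-liner.
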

\begin{proof}
Since $c\in \lune{a}{b}$, we have $\dist{c}{a}\leqslant\dist{a}{b}$ and $\dist{c}{b}\leqslant \dist{a}{b}$. Thus, $ab$ is a largest side of the triangle $\bigtriangleup abc$. This implies that $\angle acb$ is a largest internal angle of $\bigtriangleup abc$. Based on this, and since the sum of the internal angles of $\bigtriangleup abc$ is $\pi$, we conclude that $\angle acb\geqslant \frac{\pi}{3}$. 
\end{proof}

\section{Plane Spanners for Points in Convex Position}
\label{convex-section}
In this section we consider degree-3 plane spanners for points that are in convex position. Let $S$ be a finite set of points in the plane that are in convex position. Consider the two chains that are obtained from $\CH{S}$ by removing any two edges. Let $\tau$ be the larger stretch factor of these two chains; notice that $\tau$ is not necessarily determined by the endpoints of the chain. In Section~\ref{double-chain-section} we present an algorithm that computes a plane $(2\tau+1)$-spanner of degree 3 for $S$. Based on that, in Section~\ref{convex-subsection} we show how to compute a plane $\frac{3+4\pi}{3}$-spanner of degree 3 for $S$. Moreover, we show that if $S$ is centrally symmetric, then there exists a plane $(\pi+1)$-spanner of degree 3 for $S$.

\subsection{Spanner for Convex Double Chains}
\label{double-chain-section}
Let $\TC$ and $\BC$ be two chains of points in the plane that are separated by a straight line. Let $S_1$ and $S_2$ be the sets of vertices of $\TC$ and $\BC$, respectively, and assume that $S_1\cup S_2$ is in convex position. Let $\tau$ be a real number. In this section we show that if the stretch factor of each of $\TC$ and $\BC$ is at most $\tau$, then there exists a plane $(2\tau+1)$-spanner for $S_1\cup S_2$ whose degree is 3.

In order to build such a spanner, we join $\TC$ and $\BC$ by a set of edges that form a matching. Thus, the spanner consists of $\TC$, $\BC$, and a set $E$ of edges such that each edge has one endpoint in $\TC$ and one endpoint in $\BC$. The set $E$ is a matching, i.e., no two edges of $E$ are incident on a same vertex. We show how to compute $E$ recursively. Let $(a,b)$ be the closest pair of vertices between $\TC$ and $\BC$; see Figure~\ref{chains-fig}. Add this closest pair $(a,b)$ to $E$. Then remove $(a,b)$ from $\TC$ and $\BC$, and recurse on the two pairs of chains obtained on each side of $\linepq{a}{b}$. Stop the recursion as soon as one of the chains is empty. Given $\TC$ and $\BC$, the algorithm $\CS$ computes a set $E$.

\begin{algorithm}[htb]  
\caption{\CS$(\TC,\BC)$}          
\label{procA} 
\require{Two linearly separated chains $\TC$ and $\BC$ with the vertices of $\TC\cup\BC$ in convex position.}\\
\ensure{A matching between the points of $\TC$ and the points of $\BC$.}
\begin{algorithmic}[1]
    \If {$\TC=\emptyset$ or $\BC=\emptyset$}
	\State \Return $\emptyset$
    \EndIf
    \State $(a,b)\gets$ a closest pair of vertices between $\TC$ and $\BC$ such that $a\in\TC$ and $b\in \BC$
    \State $\TC',\TC''\gets$ the two chains obtained by removing $a$ from $\TC$
    \State $\BC',\BC''\gets$ the two chains obtained by removing $b$ from $\BC$
    \State \Return $\{ab\}\cup \CS(\TC',\BC')\cup \CS(\TC'',\BC'')$   
\end{algorithmic}
\end{algorithm}

\begin{figure}[tb]
  \centering
\includegraphics[draft=false, width=.65\columnwidth]{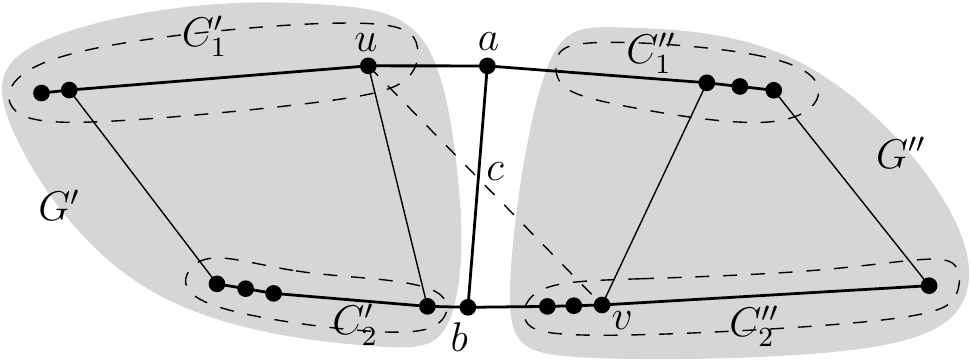}
\caption{Illustration of the proof of Theorem~\ref{chain-spanner-thr}.}
\label{chains-fig}
\end{figure}

In the rest of this section we prove the following theorem.

\begin{theorem}
 \label{chain-spanner-thr}
Let $\TC=(S_1,E_1)$ and $\BC=(S_2,E_2)$ be two linearly separated chains of points in the plane, each with stretch factor at most $\tau$, such that $S_1\cup S_2$ is in convex position. Let $E$ be the set of edges returned by algorithm $\CS(\TC,\BC)$. Then, the graph $G=(S_1\cup S_2, E_1\cup E_2 \cup E)$ is a plane $(2\tau+1)$-spanner for $S_1\cup S_2$ in which the degree of each of the endpoints of $\TC$ and $\BC$ is at most 2 and every other vertex has degree at most 3.
\end{theorem}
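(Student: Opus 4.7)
The plan is to prove all three assertions—planarity, the degree bound, and stretch factor at most $2\tau+1$—simultaneously by induction on $|S_1|+|S_2|$, following the recursive structure of $\CS$. The base case has one chain empty, whence $E=\emptyset$, $G$ collapses to the nonempty chain, which is plane, has maximum degree $2$, and is $\tau$-spanning (and $\tau\le 2\tau+1$). For the inductive step, let $(a,b)$ be the closest pair selected by the top call, so that $\linepq{a}{b}$ splits the remaining vertices into two sub-problems $(\TC',\BC')$ and $(\TC'',\BC'')$ lying in opposite closed half-planes. A contiguous sub-chain inherits stretch factor at most $\tau$ (any chain path between two vertices of the sub-chain stays within it), linear separation is inherited from the parent separator, and convex position is closed under subsets, so each sub-problem satisfies the inductive hypothesis.

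Planarity and the degree bound are straightforward. Every matching edge is a chord of the convex hull of $S_1\cup S_2$ and therefore crosses no chain edge. Matching edges produced by the left sub-call have both endpoints in the closed left half-plane of $\linepq{a}{b}$, and symmetrically on the right, so they cannot cross $ab$ or any edge on the other side; inside each half-plane the inductive hypothesis gives no crossings. The set $E$ is a matching because each vertex is removed as soon as it is matched, so $E$ contributes at most $1$ to each vertex's degree; combined with chain-degree $2$ for internal vertices of $\TC,\BC$ and $1$ for chain endpoints, the claimed bounds follow.

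The heart of the proof is the stretch factor. For any $p,q\in S_1\cup S_2$, if $p,q$ lie on the same chain, the chain path alone gives $\pathG{p}{q}{G}\le\tau\dist{p}{q}\le(2\tau+1)\dist{p}{q}$. If they lie on different chains but on the same side of $\linepq{a}{b}$, the inductive hypothesis on that sub-problem closes the case. The substantive case is $p\in\TC'$ and $q\in\BC''$ (and its symmetric variant): I consider the path $p\rightsquigarrow a$ along $\TC$, then the edge $ab$, then $b\rightsquigarrow q$ along $\BC$, of total length at most $\tau\dist{p}{a}+\dist{a}{b}+\tau\dist{b}{q}$. The key geometric input is that $\{p,a,b,q\}$ is in convex position with cyclic order $p,b,q,a$ on its convex hull, so $pq$ and $ab$ are the two diagonals of a convex quadrilateral. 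Applying the triangle inequality inside the two sub-triangles formed at the intersection of the diagonals yields $\dist{p}{a}+\dist{b}{q}<\dist{p}{q}+\dist{a}{b}$; since $(a,b)$ is the closest pair between $\TC$ and $\BC$, we have $\dist{a}{b}\le\dist{p}{q}$, so $\dist{p}{a}+\dist{b}{q}\le 2\dist{p}{q}$, and the path length is at most $(2\tau+1)\dist{p}{q}$.

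The step I expect to be the main obstacle is verifying the convex-quadrilateral configuration in the substantive case—specifically, that the cyclic order of $\{p,a,b,q\}$ on their convex hull is $p,b,q,a$, so that $pq$ and $ab$ are the diagonals rather than sides. This should follow from the convex position of $S_1\cup S_2$, the linear separation of $\TC$ and $\BC$, and the fact that $p$ and $q$ lie strictly on opposite sides of $\linepq{a}{b}$; a short case analysis ruling out the alternative cyclic orders should suffice. Degenerate configurations, such as a point of $S_1\cup S_2$ other than $a,b$ lying on $\linepq{a}{b}$, can be handled either by a consistent tie-breaking rule that assigns such a point to one side throughout the recursion or by a standard perturbation argument.
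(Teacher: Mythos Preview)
Your approach is essentially the paper's: same induction, same closest-pair split, same path $p\to a\to b\to q$ and the same diagonal argument that $\dist{p}{a}+\dist{b}{q}\le\dist{p}{q}+\dist{a}{b}\le 2\dist{p}{q}$ (the paper phrases this by naming the intersection point $c$ of $ab$ and $uv$ and applying the triangle inequality in the two triangles at $c$, exactly as you do).

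There is one genuine omission in your case analysis for the stretch factor. After removing $a$ and $b$, the cases ``same chain'', ``different chains, same side of $\linepq{a}{b}$'', and ``$p\in\TC',\,q\in\BC''$ (and symmetric)'' do not exhaust the pairs: you never treat $p=a$ with $q\in\BC$ (or $q=b$ with $p\in\TC$). Here $a$ does not belong to either sub-problem, so induction does not apply, and your convex-quadrilateral argument needs four distinct points, so it does not apply either. The paper handles this explicitly as its case~(a): take the path $a\to b\to q$, use $\dist{a}{b}\le\dist{a}{q}$ (closest pair) and $\dist{b}{q}\le\dist{a}{b}+\dist{a}{q}\le 2\dist{a}{q}$ (triangle inequality) to get $\dist{a}{b}+\tau\dist{b}{q}\le(2\tau+1)\dist{a}{q}$. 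This is easy, but without it the proof is incomplete; add this case (and its symmetric counterpart) and you match the paper's argument exactly. Your worry about verifying the cyclic order $p,a,q,b$ is not the obstacle---the paper simply asserts it from the convex position of $S_1\cup S_2$, and that is all that is needed.
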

\begin{proof}
We prove this theorem by induction on $\min\{|S_1|,|S_2|\}$. As for the base cases, if $|S_1|=0$, then $G=C_2$ is a plane $\tau$-spanner whose vertex degree is at most 2. If $|S_2|=0$, then $G=C_1$ is a plane $\tau$-spanner whose vertex degree is at most 2.

Assume $|S_1|\geqslant 1$ and $|S_2|\geqslant 1$. Let $\ell$ be a line that separates $\TC$ and $\BC$. Without loss of generality assume $\ell$ is horizontal, $\TC$ is above $\ell$, and $\BC$ is below $\ell$. Let $(a,b)$ be the pair of vertices selected by algorithm $\CS$, where $(a,b)$ is a closest pair of vertices between $\TC$ and $\BC$ such that $a\in \TC$ and $b\in \BC$.
Let $\TC'$ and $\TC''$ be the left and right sub-chains of $\TC$, respectively, that are obtained by removing $a$; see Figure~\ref{chains-fig}. We obtain $\BC'$ and $\BC''$ similarly. Note that the chains $\TC'$ and $\BC'$ satisfy the conditions of Theorem~\ref{chain-spanner-thr}. Let $G'$ be the spanner obtained for the vertices of $\TC'$ and $\BC'$. By the induction hypothesis, $G'$ is a plane $(2\tau+1)$-spanner for the vertices of $\TC'\cup \BC'$ in which the degree of each of the endpoints of $\TC'$ and $\BC'$ is at most 2 and every other vertex has degree at most 3. Similarly, let $G''$ be the spanner obtained for the vertices of $\TC''$ and $\BC''$.

Observe that in $G$ the degree of $a$, $b$, the right endpoint of $\TC'$, the right endpoint of $\BC'$, the left endpoint of $\TC''$, and the left endpoint of $\BC''$ is at most 3. Moreover, in $G$, the degree of each endpoint of $\TC$ and $\BC$ is at most 2. Thus, $G$ satisfies the degree condition. As for planarity, since $G'$ and $G''$ are both plane and they are separated by $\linepq{a}{b}$, $G'\cup G''$ is also plane. Moreover, because of convexity, the edges that are incident on each of $a$ and $b$ do not cross any edge of $G'\cup G''$. Therefore, $G$ is plane.

It remains to prove that the stretch factor of $G$ is at most $2\tau+1$. We are going to prove that for any two points $u,v\in S_1\cup S_2$ we have $\distG{u}{v}{G}\leqslant (2\tau+1)\dist{u}{v}$. If both $u$ and $v$ belong to $S_1$, or both belong to $S_2$, then $\distG{u}{v}{G}\leqslant \tau\dist{u}{v}$; this is valid because each of $\TC$ and $\BC$ has stretch factor at most $\tau$. Assume $u\in S_1$ and $v\in S_2$. If $u,v\in G'$ or $u,v\in G''$ then, by the induction hypothesis, $\distG{u}{v}{G}\leqslant (2\tau+1)\dist{u}{v}$.
Thus, it only remains to prove $\distG{u}{v}{G}\leqslant (2\tau+1)\dist{u}{v}$ for the following cases:
(a) $u=a$ and $v\in \BC$,
(b) $u\in \TC$ and $v=b$,
(c) $u\in \TC'$ and $v\in \BC''$, and 
(d) $u\in \TC''$ and $v\in \BC'$.
Because of symmetry we only prove cases (a) and (c). 

First, we prove case (a). Assume $u=a$ and $v\in \BC$. 
Note that \begin{equation}\label{casea}\distG{a}{v}{G}\leqslant \dist{a}{b}+\distG{b}{v}{\BC}\leqslant \dist{a}{v}+\tau\dist{b}{v},\end{equation} where the second inequality is valid since $\dist{a}{b}\leqslant \dist{a}{v}$, by our choice of $(a,b)$, and since $\distG{b}{v}{\BC}\leqslant \tau\dist{b}{v}$, given that the stretch factor of $\BC$ is at most $\tau$. By the triangle inequality we have $\dist{b}{v}\leqslant\dist{a}{b}+\dist{a}{v}$. Since $\dist{a}{b}\leqslant \dist{a}{v}$, we have $\dist{b}{v}\leqslant 2\dist{a}{v}$. By combining this with Inequality~\eqref{casea} we get $\distG{a}{v}{G}\leqslant (2\tau+1)\dist{a}{v},$
which completes the proof for case (a). 

Now, we prove case (c). Assume $u\in \TC'$ and $v\in \BC''$.
Since $S$ is in convex position, the polygon $Q$ formed by $u$, $a$, $v$, and $b$ is convex and its vertices appear in the order $u,a,v,b$.
Note that 
\begin{equation}\label{casec}\distG{u}{v}{G}\leqslant \distG{u}{a}{\TC}+\dist{a}{b}+\distG{b}{v}{\BC}\leqslant \tau\dist{u}{a}+\dist{u}{v}+\tau\dist{b}{v}= \dist{u}{v}+\tau(\dist{u}{a}+\dist{b}{v}),\end{equation}                                                                                                                                                                                                      
where the second inequality is valid since $\dist{a}{b}\leqslant \dist{u}{v}$, by our choice of $(a,b)$, and since $\distG{u}{a}{\TC}\leqslant \tau\dist{u}{a}$ and $\distG{b}{v}{\BC}\leqslant \tau\dist{b}{v}$, given that the stretch factor of each of $\TC$ and $\BC$ is at most $\tau$. Let $c$ be the intersection point of $ab$ and $uv$; see Figure~\ref{chains-fig}. By the triangle inequality, we have $\dist{u}{a}\leqslant \dist{u}{c}+\dist{c}{a}$ and $\dist{b}{v}\leqslant \dist{b}{c}+\dist{c}{v}$. It follows that $\dist{u}{a}+
\dist{b}{v}\leqslant \dist{u}{v}+\dist{a}{b}$. Since $\dist{a}{b}\leqslant \dist{u}{v}$, we have $\dist{u}{a}+
\dist{b}{v}\leqslant 2\dist{u}{v}$. By combining this with Inequality~\eqref{casec} we get $\distG{u}{v}{G}\leqslant (2\tau+1)\dist{u}{v},$ which completes the proof of case (c).
\end{proof}

\subsection{Spanner for Points in Convex Position}
\label{convex-subsection}
In this section we show how to construct plane spanners of degree at most 3 for points that are in convex position.
\begin{theorem}
\label{convex-thr}
Let $S$ be a finite set of points in the plane that is in convex position. Then, there exists a plane spanner for $S$ whose stretch factor is at most $\ti$ and whose vertex degree is at most 3.
\end{theorem}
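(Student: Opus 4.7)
The plan is to reduce the convex-position case to Theorem~\ref{chain-spanner-thr} by splitting $\CH{S}$ into two linearly separated chains, each of stretch factor at most $\frac{2\pi}{3}$; once this is done, Theorem~\ref{chain-spanner-thr} yields a plane spanner of degree at most $3$ with stretch factor $2\cdot\frac{2\pi}{3}+1=\ti$, as required.

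First, we pick a diametral pair $(p,q)$ of $S$. By Observation~\ref{lune-obs}, every point of $S$ lies in $L(p,q)$. The line $\ell(p,q)$ divides $L(p,q)$ into two congruent halves; each is a convex region bounded by the segment $pq$ together with two circular arcs of radius $|pq|$ and central angle $\pi/3$, so its perimeter equals $|pq|+2\cdot\frac{\pi}{3}|pq|=(1+\frac{2\pi}{3})|pq|$. The upper portion of $\CH{S}$ lies in the upper half of $L(p,q)$, and symmetrically for the lower portion. To meet the linear separation hypothesis of Theorem~\ref{chain-spanner-thr} (which forbids the two chains from sharing a vertex), we split $\CH{S}$ by removing two edges on opposite sides of $\ell(p,q)$ chosen so that $p$ ends up as an endpoint of one of the resulting chains and $q$ as an endpoint of the other; a small tilt of $\ell(p,q)$ then strictly separates the two chains.

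The crux is to show that each chain has stretch factor at most $\frac{2\pi}{3}$. Fix any two vertices $u,v$ on the upper chain, let $P$ be the sub-chain from $u$ to $v$, and let $R$ be its convex hull, bounded by $P$ and segment $uv$; thus $|P|+|uv|$ equals the perimeter of $R$. The structural step we would establish is that $(u,v)$ is a diametral pair of $P$. Granting this, Observation~\ref{lune-obs} applied to $P$ yields $P\subseteq L(u,v)$, so $R$ lies in the half of $L(u,v)$ on the same side of $\ell(u,v)$ as $P$, a convex region of perimeter $(1+\frac{2\pi}{3})|uv|$. Theorem~\ref{Benson-thr} then gives $|P|+|uv|\leqslant(1+\frac{2\pi}{3})|uv|$, hence $|P|\leqslant\frac{2\pi}{3}|uv|$, as needed. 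The same argument handles the lower chain, and a final application of Theorem~\ref{chain-spanner-thr} finishes the proof.

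The hard part will be the structural claim that the endpoints of every sub-chain $P$ form a diametral pair of $P$. The intuition is that a violation $|uw|>|uv|$ for an interior vertex $w\in P$ would push $w$ so far off the chord $uv$ that the monotone decrease of slopes along the convex chain, together with $S\subseteq L(p,q)$, would force $|pw|>|pq|$ or $|qw|>|pq|$, contradicting the maximality of $|pq|$; Lemma~\ref{angle-in-lune-lemma} applied at the relevant vertices should be the right tool to turn this into a formal argument. A secondary technicality is the choice of which two edges of $\CH{S}$ to remove, so that $p$ and $q$ land in distinct chains and the linear separation hypothesis of Theorem~\ref{chain-spanner-thr} is genuinely satisfied.
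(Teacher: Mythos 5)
Your overall architecture coincides with the paper's: pick a diametral pair $(p,q)$, split $\CH{S}$ into two linearly separated convex chains, show each chain has stretch factor at most $\frac{2\pi}{3}$, and invoke Theorem~\ref{chain-spanner-thr} to get $2\cdot\frac{2\pi}{3}+1=\ti$. (A minor structural difference: the paper deletes the vertices $p$ and $q$, runs algorithm $\CS$ on the two remaining chains, reattaches $p$ and $q$ by their hull edges, and bounds pairs involving $p$ or $q$ separately via the $1.88$ bound of~\cite{Amani2016}; your variant of keeping $p$ and $q$ as chain endpoints is acceptable in itself.) The genuine gap is the step you yourself flag as ``the hard part'': the claim that the endpoints of every sub-chain $P$ form a diametral pair of $P$. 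This claim is false. Take $p=(0,0)$, $q=(1,0)$, $u=(0.01,0.02)$, $w=(0.5,0.86)$, $v=(0.9,0.2)$. These five points are in convex position, every pairwise distance is at most $1=\dist{p}{q}$, so $(p,q)$ is a diametral pair and all points lie in the upper half of $\lune{p}{q}$; the chain containing $p$ is $p,u,w,v$. For the sub-chain $P=(u,w,v)$ one has $\dist{u}{w}\approx 0.973>0.908\approx\dist{u}{v}$, so $(u,v)$ is not a diametral pair of $P$, the vertex $w$ lies outside $\lune{u}{v}$, and the half-lune perimeter argument via Theorem~\ref{Benson-thr} has nothing to act on. The phenomenon is intrinsic: a vertex near the apex of $\lune{p}{q}$ is at distance almost $\dist{p}{q}$ from any vertex near $p$, which can exceed the distance between the sub-chain's own endpoints; no choice of which two hull edges to delete avoids this, so the intuition you offer for proving the claim cannot be made to work.

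This is precisely why the paper's Theorem~\ref{diametral-lune-thr} is stated relative to the lune of the chain's \emph{own} endpoints and why its proof is long. The only easy case is the one your argument effectively covers: when the apex $c$ of the two supporting rays lies inside the lune, Lemmas~\ref{triangle-spanner} and~\ref{angle-in-lune-lemma} give the bound $2<\frac{2\pi}{3}$. The cases where the sub-chain escapes $\lune{u}{v}$ are reduced, via a homothety centered at $c$, to the case $u=p$, and that case requires maximizing the explicit functions $f(x,\alpha)$ and $g(x,\alpha)$ over their domains (Appendices~\ref{appA} and~\ref{appB}); the constant $\frac{2\pi}{3}$ is tight there, with $f$ peaking near $2.047$. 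So your reduction to Theorem~\ref{chain-spanner-thr} and the final arithmetic are correct, but the core analytic content of the theorem is missing rather than merely deferred.
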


\noindent The proof of this theorem uses the following result, which will be proved in Subsection~\ref{diametral-lune-section}:

\begin{theorem}
\label{diametral-lune-thr}
 Let $C$ be a convex chain with endpoints $p$ and $q$. If $C$ is in $\lune{p}{q}$, then the stretch factor of $C$ is at most $\frac{2\pi}{3}$.
\end{theorem}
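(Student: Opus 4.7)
The plan is to apply Benson's theorem (Theorem~\ref{Benson-thr}) to the convex polygonal region bounded by the chain and the chord $pq$. Without loss of generality $C$ lies entirely on one side of the line $\linepq{p}{q}$; let $H$ denote the half of $\lune{p}{q}$ on that side. Then $H$ is a convex region bounded by the segment $pq$ together with two circular arcs of radius $\dist{p}{q}$, each of which subtends a central angle of $\pi/3$ at its center (one centered at $p$, the other at $q$) and hence has length $\tfrac{\pi}{3}\dist{p}{q}$. Consequently the perimeter of $H$ equals $(1 + \tfrac{2\pi}{3})\dist{p}{q}$.

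Since $C$ is a convex chain, together with segment $pq$ it encloses a convex polygonal region $\Pi \subseteq H$. By Theorem~\ref{Benson-thr} the perimeter of $\Pi$ is at most that of $H$, giving $|C| + \dist{p}{q} \leq (1 + \tfrac{2\pi}{3})\dist{p}{q}$ and hence $|C| \leq \tfrac{2\pi}{3}\dist{p}{q}$; this settles the stretch for the pair $(p,q)$. To extend the bound to an arbitrary pair of vertices $u, v$ on $C$, I would apply exactly the same argument to the sub-chain $C_{uv}$ of $C$ from $u$ to $v$, treating $u, v$ as new endpoints. The sub-chain $C_{uv}$ is itself convex, and together with segment $uv$ it bounds a convex polygonal region $\Pi_{uv}$. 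Provided $\Pi_{uv}$ is contained in the half of $\lune{u}{v}$ that lies on the same side of segment $uv$ as $C_{uv}$, Benson's theorem again yields $|C_{uv}| + \dist{u}{v} \leq (1 + \tfrac{2\pi}{3})\dist{u}{v}$, and hence $|C_{uv}| \leq \tfrac{2\pi}{3}\dist{u}{v}$, as required.

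The main obstacle is therefore the containment claim $C_{uv} \subseteq \lune{u}{v}$, equivalently that every vertex $w$ of $C$ strictly between $u$ and $v$ along $C$ satisfies $\dist{u}{w} \leq \dist{u}{v}$ and $\dist{v}{w} \leq \dist{u}{v}$. I expect this to be the technical heart of the argument. It should follow by combining the monotonic rotation of the tangent direction along a convex chain, which constrains how $C_{uv}$ can bulge relative to the chord $uv$, with the global constraint $C \subseteq \lune{p}{q}$, which limits how far $C$ can extend from line $\linepq{p}{q}$: intuitively, a violation $\dist{u}{w} > \dist{u}{v}$ would force the convex chain to curve back so sharply between $w$ and $v$ that it would have to exit $\lune{p}{q}$ before reaching $v$, yielding a contradiction. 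Once this claim is proved, the Benson-based argument of the second paragraph closes the stretch factor bound.
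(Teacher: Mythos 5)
Your first paragraph (the pair $(p,q)$ itself) is exactly how the paper handles that case, via Theorem~\ref{Benson-thr} applied to the half-lune of perimeter $(1+\tfrac{2\pi}{3})\dist{p}{q}$. The problem is the step you yourself flag as the technical heart: the containment $C_{uv}\subseteq \lune{u}{v}$ for an arbitrary pair of vertices is \emph{false}, so the reduction of the general case to Benson's theorem collapses. Concretely, take $p=(0,0)$, $q=(10,0)$, and the chain $p\to u\to w\to v\to q$ with $u=(4,\,6.93)$, $w=(5,\,8.6)$, $v=(7,\,5.2)$. The edge directions make angles of approximately $60.0^\circ$, $59.1^\circ$, $-59.5^\circ$, $-60.0^\circ$ with the horizontal, so the chain is convex; and $\dist{p}{u}=8$, $\dist{q}{u}\approx 9.17$, $\dist{p}{w}=\dist{q}{w}\approx 9.95$, $\dist{p}{v}\approx 8.72$, $\dist{q}{v}=6$ are all less than $10$, so $C\subseteq \lune{p}{q}$. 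Yet $\dist{u}{v}=\sqrt{12}\approx 3.46$ while $\dist{v}{w}=\sqrt{15.56}\approx 3.94>\dist{u}{v}$, so $w\notin\lune{u}{v}$. The intuition that convexity plus $C\subseteq\lune{p}{q}$ forces the sub-chain back into $\lune{u}{v}$ fails precisely near the apex of the lune, where the chain may turn by almost $120^\circ$ at a single vertex; the theorem's conclusion still holds for this pair (the detour ratio is about $1.70$), but not for the reason you give.

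For what it is worth, the paper does not attempt any such containment for general pairs. It proves the bound for pairs with $u=p$ (or $v=q$) by bounding $\pathG{p}{v}{C}$ with an explicit region cut out of the half-lune and optimizing the resulting trigonometric ratios $f(x,\alpha)$ and $g(x,\alpha)$ in Appendices~\ref{appA} and~\ref{appB}; for two interior vertices it lets $c$ be the intersection of $\ray{p}{u}$ and $\ray{q}{v}$ and argues either directly (if $c\in\lune{p}{q}$, then $\angle ucv\geqslant\frac{\pi}{3}$ by Lemma~\ref{angle-in-lune-lemma} and Lemma~\ref{triangle-spanner} gives ratio at most $2$) or by a homothety centered at $c$ that maps $u$ to $p$ and reduces to the endpoint case. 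Some case analysis of this kind appears unavoidable: the constant $\frac{2\pi}{3}$ is attained only in the limit of the endpoint case, so a single clean Benson argument over $\lune{u}{v}$ cannot cover all pairs.
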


\begin{proof}[Proof of Theorem~\ref{convex-thr}]
The proof is constructive; we present an algorithm that constructs such a spanner for $S$.
The algorithm works as follows. Let $(p,q)$ be a diametral pair of $S$. 
Consider the convex hull of $S$. Let $\TC$ and $\BC$ be the two chains obtained from $\CH{S}$ by removing $p$ and $q$ (and their incident edges). Note that $\TC$ and $\BC$ are separated by $\linepq{p}{q}$. Let $G'$ be the graph on $S\setminus\{p,q\}$ that contains the edges of $\TC$, the edges of $\BC$, and the edges obtained by running algorithm $\CS(\TC,\BC)$. By Theorem~\ref{chain-spanner-thr}, $G'$ is plane and the endpoints of $\TC$ and $\BC$ have degree at most 2. We obtain a desired spanner, $G$, by connecting $p$ and $q$, via their incident edges in $\CH{S}$, to $G'$. This construction is summarized in algorithm $\PS$.

\begin{algorithm}[H]  
\caption{\PS$(S)$}          
\label{procB} 
\require{A non-empty finite set $S$ of points in the plane that is in convex position}\\
\ensure{A plane degree-3 spanner of $S$.}
\begin{algorithmic}[1]
    \State $(p,q)\gets$ a diametral pair of $S$
    \State $\TC,\BC\gets$ the two chains obtained by removing $p$ and $q$ from $\CH{S}$
    \State $E\gets \CH{S}\cup\CS(\TC,\BC)$
    \State \Return $G=(S,E)$   
\end{algorithmic}
\end{algorithm}

Observe that $G$ is plane. Moreover, all vertices of $G$ have degree at most 3; $p$ and $q$ have degree 2. 
Now we show that the stretch factor of $G$ is at most $\ti\approx 5.19$. Note that $G$ consists of $\CH{S}$ and a matching which is returned by algorithm $\CS$. Since $p$ and $q$ are diametral points, then by a result of~\cite{Amani2016}, for any point $s\in S\setminus\{p\}$ we have $\distG{p}{s}{\CH{S}}\leqslant 1.88\dist{p}{s}$. Since $\CH{S}\subseteq G$, we have $\distG{p}{s}{G}\leqslant 1.88\dist{p}{s}$. By symmetry, the same result holds for $q$ and any point $s\in S\setminus\{q\}$. 
Since $(p,q)$ is a diametral pair of $S$, both $\TC$ and $\BC$ are in $\lune{p}{q}$. Based on that, in Theorem~\ref{diametral-lune-thr}, we will see that both $\TC$ and $\BC$ have stretch factor at most $\frac{2\pi}{3}$. Then, by Theorem~\ref{chain-spanner-thr}, the stretch factor of $G'$ is at most $\ti$. Since $G'\subset G$, for any two points $r,s\in S\setminus\{p,q\}$ we have $\distG{r}{s}{G}\leqslant \ti\dist{r}{s}$. Therefore, the stretch factor of $G$ is at most $\ti$. This completes the proof of the theorem.
\end{proof}
A point set $S$ is said to be {\em centrally symmetric} (with respect to the origin), if for every point $p\in S$, the point $-p$ also belongs to $S$.
\begin{theorem}
\label{symmetric-convex-thr}
Let $S$ be a finite centrally symmetric point set in the plane that is in convex position. Then, there exists a plane spanner for $S$ whose stretch factor is at most $\pi+1$ and whose vertex degree is at most 3.
\end{theorem}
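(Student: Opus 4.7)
The plan is to mirror the construction of Theorem~\ref{convex-thr} and algorithm $\PS$, but to exploit the central symmetry of $S$ in order to sharpen the bound on the stretch factor of the two convex-hull chains fed into algorithm $\CS$. First I would pick $p\in S$ maximizing $|p|$ and let $q=-p$; then $(p,q)$ is a diametral pair, since for any $a,b\in S$, $\dist{a}{b}\leqslant |a|+|b|\leqslant 2|p|=\dist{p}{q}$, and consequently $S$ lies in the closed disk $D$ of diameter $pq$. Running $\PS$ with this diametral pair produces a plane degree-3 graph $G$ containing $\CH{S}$. The Amani \emph{et al.}\ bound invoked in the proof of Theorem~\ref{convex-thr} then gives $\distG{p}{s}{G}\leqslant 1.88\,\dist{p}{s}<(\pi+1)\dist{p}{s}$ for every $s\in S\setminus\{p\}$, and symmetrically for $q$, so only pairs in $S\setminus\{p,q\}$ need further attention.

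The geometric gain from central symmetry is that $\linepq{p}{q}$ passes through the origin, so each of the two convex-hull chains from $p$ to $q$ lies in a closed half-disk whose diameter is $pq$, rather than merely in the lune $\lune{p}{q}$. The chains $\TC$ and $\BC$ obtained after removing $p$ and $q$ inherit the stretch factor of the longer chains containing them. The bulk of the proof consists in establishing the following key lemma, which upgrades the bound of Theorem~\ref{diametral-lune-thr} from $2\pi/3$ to $\pi/2$ under the stronger hypothesis of half-disk containment: \emph{if $C$ is a convex chain with endpoints $p$ and $q$ lying in a closed half-disk whose diameter is $pq$, then the stretch factor of $C$ is at most $\pi/2$.} Granted this, Theorem~\ref{chain-spanner-thr} applied with $\tau=\pi/2$ to $\TC$ and $\BC$ produces a plane $(\pi+1)$-spanner of $S\setminus\{p,q\}$ of degree at most 3, and reattaching $p$ and $q$ via their $\CH{S}$-edges preserves planarity and the degree bound just as in Theorem~\ref{convex-thr}.

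The main obstacle is the key lemma. My plan is to place $p=(-R,0)$ and $q=(R,0)$ and write the chain as the graph of a concave function $f\colon[-R,R]\to[0,\infty)$ with $f(\pm R)=0$ and $f(x)\leqslant\sqrt{R^2-x^2}$. For any two chain vertices $u=(a,f(a))$ and $v=(b,f(b))$ with $a<b$, I aim to show that the subchain between $u$ and $v$ is contained in the closed half-disk $H_{uv}$ with diameter $uv$ lying on the chain's side of the segment $uv$; then Theorem~\ref{Benson-thr}, applied to the convex hull of this subchain sitting inside $H_{uv}$, yields a subchain length of at most $(\pi/2)\dist{u}{v}$. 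By the Thales inscribed-angle criterion the required containment reduces to $\angle urv\geqslant\pi/2$ for every intermediate vertex $r=(c,f(c))$ with $a<c<b$, which in coordinates is the scalar inequality
\[(f(c)-f(a))(f(c)-f(b))\;\leqslant\;(c-a)(b-c).\]
To prove this I would combine the two chord lower bounds furnished by concavity of $f$ on $[-R,c]$ and on $[c,R]$, namely $f(a)\geqslant f(c)(a+R)/(c+R)$ and $f(b)\geqslant f(c)(R-b)/(R-c)$, with the half-disk ceiling $f(c)\leqslant\sqrt{R^2-c^2}$; together they yield $(f(c)-f(a))(f(c)-f(b))\leqslant f(c)^2(c-a)(b-c)/(R^2-c^2)\leqslant (c-a)(b-c)$, closing the argument.
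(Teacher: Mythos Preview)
Your high-level plan coincides with the paper's: run $\PS(S)$, inherit planarity and the degree bound from the proof of Theorem~\ref{convex-thr}, handle pairs involving $p$ or $q$ via the Amani \emph{et al.}\ convex-hull bound, and invoke Theorem~\ref{chain-spanner-thr} with $\tau=\pi/2$. Your ``key lemma'' is precisely Theorem~\ref{diametral-circle-thr}, which the paper states separately and proves in Subsection~\ref{diametral-circle-section}.

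Where you diverge is in how that lemma is proved. The paper argues geometrically, splitting into cases according to whether $u=p$, $v=q$, or neither, and in the last case according to whether the intersection of $\ray{p}{u}$ and $\ray{q}{v}$ lies inside $\disk{p}{q}$; in each case it uses Thales' theorem on $\disk{p}{q}$ to locate arcs and show $\pathG{u}{v}{C}\subseteq\disk{u}{v}$. Your route parameterises the chain as the graph of a concave function and reduces membership of every intermediate vertex in $\disk{u}{v}$ to the single scalar inequality $(f(c)-f(a))(f(c)-f(b))\leqslant(c-a)(b-c)$, which you dispatch with the two chord lower bounds plus the ceiling $f(c)^2\leqslant R^2-c^2$. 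This is correct and avoids the paper's case split entirely. Two small points worth making explicit in a full write-up: (i) multiplying your two upper bounds as stated presumes both factors $f(c)-f(a)$ and $f(c)-f(b)$ are nonnegative---when one is negative the target inequality is immediate, and concavity on $[a,b]$ forces $f(c)\geqslant\min\{f(a),f(b)\}$, so both cannot be negative; (ii) representing the chain as the graph of a function uses that $p$ and $q$ are the unique leftmost and rightmost points of the enclosed convex region, which holds because the half-disk meets the vertical lines $x=\pm R$ only at $p$ and $q$.
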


\noindent The proof of this theorem uses the following result, which will be proved in Subsection~\ref{diametral-circle-section}:

\begin{theorem}
\label{diametral-circle-thr}
 Let $C$ be a convex chain with endpoints $p$ and $q$. If $C$ is in $\disk{p}{q}$, then the stretch factor of $C$ is at most $\frac{\pi}{2}$.
\end{theorem}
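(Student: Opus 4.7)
The plan is to show that for any two points $u,v\in C$ the sub-chain $C_{uv}$ of $C$ between them is itself contained in $\disk{u}{v}$, and then to bound $|C_{uv}|$ by $\frac{\pi}{2}\dist{u}{v}$ by applying Theorem~\ref{Benson-thr} to the convex region cut off by $C_{uv}$ and the chord $uv$. Since the stretch factor of $C$ equals the supremum of $|C_{uv}|/\dist{u}{v}$ over $u,v\in C$, this yields the claimed bound.

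The first step is to establish the containment $C_{uv}\subseteq \disk{u}{v}$. Fixing an arbitrary point $w$ on $C_{uv}$, I would show $\angle uwv\geqslant \pi/2$, which is equivalent to $w\in\disk{u}{v}$ by Thales' theorem. Because $C$ is convex, the five points $p,u,w,v,q$ appear in this order along $C$ and hence in convex position on the boundary of the convex polygon bounded by $C$ and the segment $pq$; they therefore form a (possibly degenerate) convex pentagon in which the neighbours of $w$ are $u$ and $v$. Consequently the interior angle of this pentagon at $w$ equals $\angle uwv$, and since $p$ and $q$ are the two remaining vertices of the same convex pentagon, the rays $\ray{w}{p}$ and $\ray{w}{q}$ both lie inside that interior angle. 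This gives $\angle pwq\leqslant \angle uwv$. On the other hand, the hypothesis $C\subseteq \disk{p}{q}$ forces $w\in\disk{p}{q}$, and hence $\angle pwq\geqslant \pi/2$. The two inequalities combine to give $\angle uwv\geqslant \pi/2$.

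For the second step, $C_{uv}$ together with the segment $uv$ bounds a convex polygonal region $P_{uv}$ which, by the containment just proved, lies in the closed half of $\disk{u}{v}$ on the side of $\linepq{u}{v}$ containing $C_{uv}$. The boundary of this half-disk has length $(1+\pi/2)\dist{u}{v}$. Approximating the half-disk from within by polygonal convex regions that contain $P_{uv}$ and whose perimeters tend to $(1+\pi/2)\dist{u}{v}$ (for instance, the convex hull of $P_{uv}$ together with an increasingly dense sampling of the bounding semicircle), Theorem~\ref{Benson-thr} yields that the perimeter of $P_{uv}$, namely $\dist{u}{v}+|C_{uv}|$, is at most $(1+\pi/2)\dist{u}{v}$; cancelling $\dist{u}{v}$ gives $|C_{uv}|\leqslant \frac{\pi}{2}\dist{u}{v}$, as required.

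The step I expect to require the most care is the angular inequality $\angle pwq\leqslant \angle uwv$. While it is immediate in the generic case from the cyclic ordering on $\CH{\{p,u,w,v,q\}}$, one has to handle the degenerate configurations in which some of the five points coincide (e.g.\ $u=p$ or $v=q$), three consecutive points are collinear, or $w$ lies in the relative interior of an edge of $C$ rather than at a vertex; a short case analysis should confirm that the inequality, and with it the containment $C_{uv}\subseteq \disk{u}{v}$, persists in all such situations.
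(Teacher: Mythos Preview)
Your proof is correct and follows the same two-step strategy as the paper: first establish $C_{uv}\subseteq\disk{u}{v}$, then invoke Theorem~\ref{Benson-thr} to bound $|C_{uv}|$ by the half-circumference $\tfrac{\pi}{2}\dist{u}{v}$. Where you differ is in the first step. The paper proves the containment by a four-case analysis depending on whether $u=p$ or $v=q$, and in the generic case $u\neq p$, $v\neq q$ it splits further according to whether the intersection $c$ of $\ray{p}{u}$ and $\ray{q}{v}$ lies inside or outside $\disk{p}{q}$; each case is settled by locating auxiliary points $u',v'$ on the boundary circle and arguing which arcs of $\partial\disk{p}{q}$ lie inside $\disk{u}{v}$. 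Your angular argument---observing that convex position of $p,u,w,v,q$ forces $\angle pwq\leqslant\angle uwv$, and combining this with $\angle pwq\geqslant\tfrac{\pi}{2}$ from $w\in\disk{p}{q}$---is shorter and handles every point $w$ uniformly; the degenerate configurations you flag (coincidences such as $u=p$, or $w$ on an edge) are indeed harmless, since the pentagon merely collapses to a quadrilateral or triangle and the inclusion of $\ray{w}{p},\ray{w}{q}$ in the interior angle at $w$ persists. Your care in passing to the limit when applying Theorem~\ref{Benson-thr} to the non-polygonal half-disk is also a nice touch; the paper simply applies the theorem to the half-disk without comment.
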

\begin{proof}[Proof of Theorem~\ref{symmetric-convex-thr}]
Let $G$ be the graph obtained by $\PS(S)$. As we have seen in the proof of Theorem~\ref{convex-thr}, $G$ is plane and has vertex degree at most 3. It remains to show that the stretch factor of $G$ is at most $\pi+1$. Let $(p,q)$ be the diametral pair of $S$ that is considered by algorithm $\PS$. Since $S$ is centrally symmetric, all points of $S$ are in $\disk{p}{q}$. Based on that, in Theorem~\ref{diametral-circle-thr}, we will see that both $\TC$ and $\BC$ have stretch factor at most $\frac{\pi}{2}$. Then Theorem~\ref{chain-spanner-thr} implies that the stretch factor of $G$ is at most $\pi+1$. 
\end{proof}

In the following two subsections we will prove Theorems~\ref{diametral-lune-thr} and \ref{diametral-circle-thr}.
Let $C$ be a chain of points. For any two points $u$ and $v$ on $C$ we denote by $\pathG{u}{v}{C}$ the path between $u$ and $v$ on $C$. Recall that $\distG{u}{v}{C}$ denote the length of $\pathG{u}{v}{C}$.

\subsubsection{Proof of Theorem~\ref{diametral-circle-thr}}
\label{diametral-circle-section}
Let $p$ and $q$ be the endpoints of the convex chain $C$. Assume that $C$ is in $\disk{p}{q}$. We are going to show that the stretch factor of $C$ is at most $\frac{\pi}{2}$.

Since $C$ is convex, it is contained in a half-disk of $\disk{p}{q}$, i.e., a half-disk with diameter $pq$. Let $u$ and $v$ be any two points of $C$. We show that $\pathG{u}{v}{C}$ is in $\disk{u}{v}$. Then, by Theorem~\ref{Benson-thr} the length of $\pathG{u}{v}{C}$ is at most the length of the half-arc of $\disk{u}{v}$, which is $\frac{\pi}{2}\dist{u}{v}$.
Without loss of generality assume that $pq$ is horizontal, $p$ is to the left of $q$, and $C$ is above $pq$. Assume that $u$ appears before $v$ while traversing $C$ from $p$ to $q$. See Figure~\ref{diametral-circle-fig}. We consider the following cases. 
\begin{figure}[H]
  \centering
\setlength{\tabcolsep}{0in}
  $\begin{tabular}{ccc}
 \multicolumn{1}{m{.33\columnwidth}}{\centering\includegraphics[draft=false, width=.27\columnwidth]{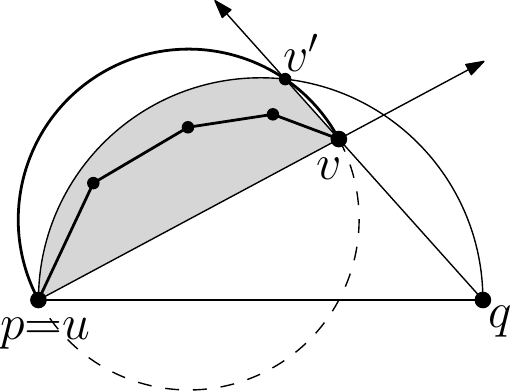}}
&\multicolumn{1}{m{.33\columnwidth}}{\centering\includegraphics[draft=false, width=.27\columnwidth]{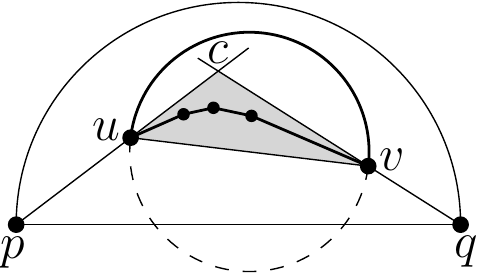}}
&\multicolumn{1}{m{.33\columnwidth}}{\centering\includegraphics[draft=false, width=.27\columnwidth]{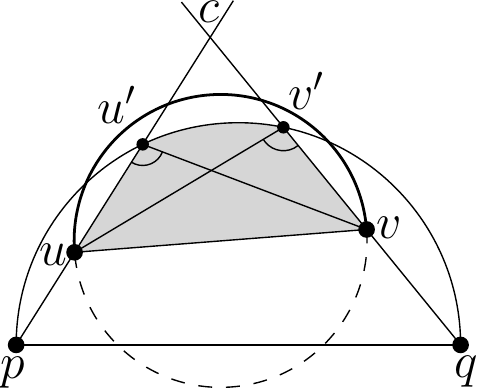}}
\\
(a) & (b) & (c)
\end{tabular}$
  \caption{Proof of Theorem~\ref{diametral-circle-thr}: the path $\pathG{u}{v}{C}$ is inside the shaded regions, where (a) $u=p$ and $v\neq q$, (b) $u\neq p$, $v\neq q$, and $c\in\disk{p}{q}$, and (c) $u\neq p$, $v\neq q$, and $c\notin\disk{p}{q}$.}
\label{diametral-circle-fig}
\end{figure}
\begin{itemize}
 \item $u=p$ and $v=q$. Then $\pathG{p}{q}{C}=C$ is in $\disk{p}{q}$ by the hypothesis.

 \item $u=p$ and $v\neq q$. Let $v'$ be the intersection point of $\ray{q}{v}$ with the boundary of $\disk{p}{q}$. See Figure~\ref{diametral-circle-fig}(a). Observe that $\angle pv'v=\angle pv'q=\frac{\pi}{2}$. Thus, $v'$ is on the boundary of $\disk{p}{v}$. Since two circles can intersect in at most two points, $p$ and $v'$ are the only intersection points of the boundaries of $\disk{p}{q}$ and $\disk{p}{v}$. Thus, the clockwise arc $\rwf{pv'}$ on the boundary of $\disk{p}{q}$ is inside $\disk{p}{v}$. On the other hand, because of convexity, no point of $\pathG{p}{v}{C}$ is to the right of $\ray{p}{v}$ or $\ray{q}{v}$. This implies that $\pathG{p}{v}{C}$ is in $\disk{p}{v}$.
 
 \item $u\neq p$ and $v=q$. The proof of this case is similar to the proof of the previous case.

 \item $u\neq p$ and $v\neq q$. Let $c$ be the intersection point of $\ray{p}{u}$ and $\ray{q}{v}$. Because of convexity, $\pathG{u}{v}{C}$ is in the triangle $\bigtriangleup ucv$. We look at two cases:

\begin{itemize}
 \item $c$ is inside $\disk{p}{q}$. In this case $\angle ucv\geqslant \frac{\pi}{2}$. See Figure~\ref{diametral-circle-fig}(b). This implies that the point $c$, and consequently the triangle $\bigtriangleup ucv$, are inside $\disk{u}{v}$. Thus, $\pathG{u}{v}{C}$ is inside $\disk{u}{v}$. 
 \item $c$ is outside $\disk{p}{q}$. Let $u'$ (resp. $v'$) be the intersection point of $\ray{p}{u}$ (resp. $\ray{q}{v}$) with $\disk{p}{q}$. Because of the convexity and the fact that the path $\pathG{u}{v}{C}$ is contained in $\disk{p}{q}$, this path is inside the region that is bounded by line segments $(u',u)$, $(u,v)$, $(v,v')$ and the boundary of  $\disk{p}{q}$; see the shaded region of Figure~\ref{diametral-circle-fig}(c). Observe that by convexity $\angle uv'v>\angle pv'q=\frac{\pi}{2}$, and $\angle uu'v>\angle pu'q=\frac{\pi}{2}$. Thus, both $u'$ and $v'$ are inside $\disk{u}{v}$. Consequently, the clockwise arc $\rwf{u'v'}$ on the boundary of $\disk{p}{q}$ is inside $\disk{u}{v}$. Therefore, $\pathG{u}{v}{C}$ is inside $\disk{u}{v}$.
\end{itemize}
\end{itemize}

\subsubsection{Proof of Theorem~\ref{diametral-lune-thr}}
\label{diametral-lune-section}
Let $p$ and $q$ be the endpoints of the convex chain $C$. Assume that $C$ is in $\lune{p}{q}$. We are going to show that the stretch factor of $C$ is at most $\frac{2\pi}{3}$.
 
Since $C$ is convex, it is contained in a half-lune of $\lune{p}{q}$; i.e., a portion of $\lune{p}{q}$ that is obtained by cutting it through $pq$. Let $u$ and $v$ be any two points of $C$. We show that the length of $\pathG{u}{v}{C}$ is at most $\frac{2\pi}{3}$ times $\dist{u}{v}$. 
Without loss of generality assume $pq$ is horizontal, $p$ is to the left of $q$, and $C$ is above $pq$. Assume that $u$ appears before $v$ when traversing $C$ from $p$ to $q$. See Figure~\ref{diametral-lune-fig}. 
We consider two cases: (1) $u=p$ or $v=q$, (2) $u\neq p$ and $v\neq q$.

\begin{enumerate}
 \item $u=p$ or $v=q$. Without loss of generality assume $u=p$. If $v=q$ then $\pathG{p}{q}{C}$ is inside the half-lune of $\lune{p}{q}$ and by Theorem~\ref{Benson-thr} the length of $\pathG{p}{q}{C}$ is at most the length of the half-lune, which is $\frac{2\pi}{3} \dist{p}{q}$. Assume $v\neq q$. For simplicity, we assume that $\dist{p}{q}=1$. Let $\alpha=\angle pqv$ and $x=\dist{q}{v}$. If $\alpha=0$, then the length of $\pathG{p}{v}{C}$ is equal to $|pv|$ and we are done. Thus, assume that $\alpha>0$. Moreover, since $v\neq q$, $x>0$. We consider the following two cases: (a) $\alpha\leqslant \frac{\pi}{3}$, (b) $\alpha> \frac{\pi}{3}$.
 \begin{enumerate}
    \item $\alpha\leqslant \frac{\pi}{3}$. Let $v'$ be the intersection point of $\ray{q}{v}$ and $\lune{p}{q}$. Because of the convexity and the fact that the path $\pathG{p}{v}{C}$ is contained in $\lune{p}{q}$, this path is inside the region that is bounded by line segments $(p,v)$, $(v,v')$ and the boundary of  $\lune{p}{q}$; see the shaded region of Figure~\ref{diametral-lune-fig}(a). By Theorem~\ref{Benson-thr} we have the following inequality: $$\distG{p}{v}{C}\leqslant |vv'|+ |\rwf{pv'}|=1-x+\alpha,$$
    where $|\rwf{pv'}|$ is the length of the clockwise arc from $p$ to $v'$ which has radius 1 and is centered at $q$. Note that $$\dist{p}{v}=\sqrt{(x \sin\alpha)^2+(1-x \cos\alpha)^2}.$$
    Define 
    \begin{equation}
      f(x,\alpha)=\frac{1-x+\alpha}{\sqrt{(x \sin\alpha)^2+(1-x \cos\alpha)^2}}.
     \end{equation}
    Then $\frac{\distG{p}{v}{C}}{\dist{p}{v}}\leqslant f(x,\alpha)$. In Appendix~\ref{appA}, we will show that $f(x,\alpha)\leqslant 2.04738\leqslant \frac{2\pi}{3}$ for $0< x\leqslant 1$ and $0< \alpha\leqslant \frac{\pi}{3}$. 

\begin{figure}[H]
  \centering
\setlength{\tabcolsep}{0in}
  $\begin{tabular}{cc}
 \multicolumn{1}{m{.5\columnwidth}}{\centering\includegraphics[draft=false, width=.28\columnwidth]{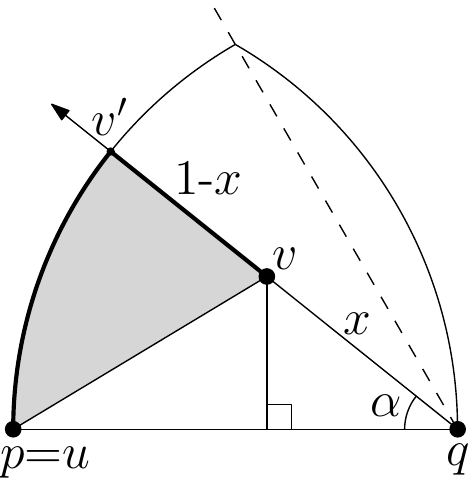}}
&\multicolumn{1}{m{.5\columnwidth}}{\centering\includegraphics[draft=false, width=.28\columnwidth]{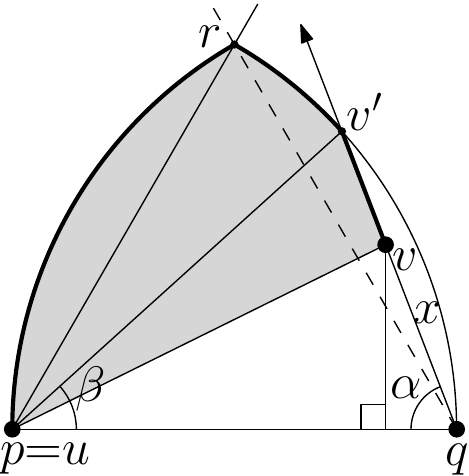}}
\\
(a) & (b) 
\\\multicolumn{1}{m{.5\columnwidth}}{\centering\includegraphics[draft=false, width=.28\columnwidth]{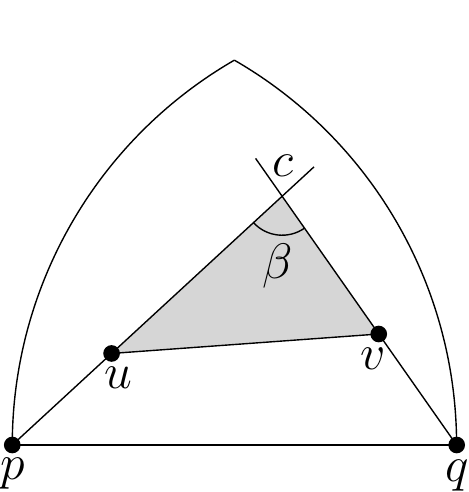}}
&\multicolumn{1}{m{.5\columnwidth}}{\centering\includegraphics[draft=false, width=.28\columnwidth]{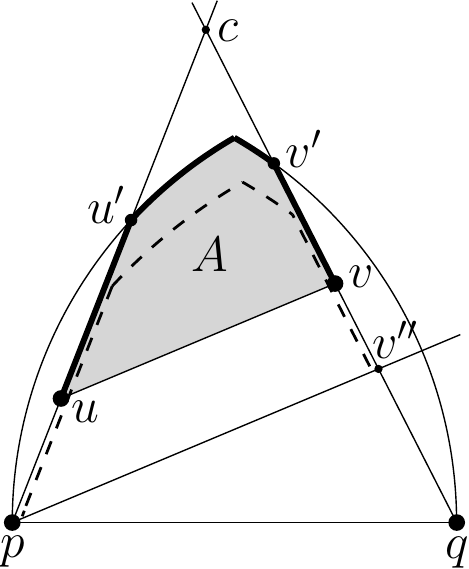}}
\\ (c)&(d)
\end{tabular}$
  \caption{Proof of Theorem~\ref{diametral-lune-thr}: the path $\pathG{u}{v}{C}$ is inside the shaded regions, where (a) $u=p$ and $\alpha\leqslant \frac{\pi}{3}$, (b) $u= p$ and $\alpha> \frac{\pi}{3}$, (c) $u\neq p$, $v\neq q$, and $c\in\lune{p}{q}$, and (d) $u\neq p$, $v\neq q$, and $c\notin\lune{p}{q}$.}
\label{diametral-lune-fig}
\end{figure}
 
    \item $\alpha> \frac{\pi}{3}$. Let $v'$ be the intersection point of $\ray{q}{v}$ and $\lune{p}{q}$, and let $r$ be the topmost intersection point of the two circles of radius $|pq|$ that are centered at $p$ and $q$, i.e., the highest point of $\lune{p}{q}$. Because of the convexity and the fact that the path $\pathG{p}{v}{C}$ is contained in $\lune{p}{q}$, this path is inside the region that is bounded by line segments $(p,v)$, $(v,v')$ and the boundary of  $\lune{p}{q}$; see the shaded region of Figure~\ref{diametral-lune-fig}(b). By Theorem~\ref{Benson-thr} we have the following inequality: $$\distG{p}{v}{C}\leqslant |vv'|+ |\rwf{rv'}|+|\rwf{pr}|,$$
    where $|\rwf{rv'}|$ is the length of the clockwise arc from $r$ to $v'$ which has radius 1 and is centered at $p$ and $|\rwf{pr}|$ is the length of the clockwise arc from $p$ to $r$ which has radius 1 and is centered at $q$. Let $\beta=\angle v'pq$. Then, $|\rwf{rv'}|=\frac{\pi}{3}- \beta$. Note that the triangle $\bigtriangleup pqv'$ is isosceles. Thus, $\beta = \pi- 2\alpha$ and $\dist{q}{v'}=2\cos\alpha$. Thus $\dist{v}{v'}=2\cos\alpha-x$. Therefore, $$\distG{p}{v}{C}\leqslant (2\cos\alpha-x)+ \left(\frac{\pi}{3}-\beta\right)+\frac{\pi}{3}=2(\alpha+\cos\alpha)-\left(x+\frac{\pi}{3}\right).$$ Define 
    \begin{equation}
      g(x,\alpha)=\frac{2(\alpha+\cos\alpha)-(x+\pi/3)}{\sqrt{(x \sin\alpha)^2+(1-x \cos\alpha)^2}}.
    \end{equation}
   Then $\frac{\distG{p}{v}{C}}{\dist{p}{v}}\leqslant g(x,\alpha)$. In Appendix~\ref{appB}, we will show that $g(x,\alpha)\leqslant\frac{2\pi}{3}$ for $0< x\leqslant 2\cos\alpha$ and $\frac{\pi}{3}\leqslant \alpha\leqslant \frac{\pi}{2}$.
\end{enumerate}

 \item $u\neq p$ and $v\neq q$. Let $c$ be the intersection point of rays $\ray{p}{u}$ and $\ray{q}{v}$. We differentiate between two cases: (a) $c$ is inside $\lune{p}{q}$, (b) $c$ is outside $\lune{p}{q}$.
\begin{enumerate}
\item $c$ is inside $\lune{p}{q}$. Because of the convexity, $\pathG{u}{v}{C}$ is inside the triangle $\bigtriangleup uvc$; see Figure~\ref{diametral-lune-fig}(c). Thus $\distG{u}{v}{C}\leqslant \dist{u}{c}+\dist{v}{c}$. Let $\beta=\angle ucv$. By Lemma~\ref{angle-in-lune-lemma} we have $\beta\geqslant \frac{\pi}{3}$. Based on this, and by Lemma~\ref{triangle-spanner}, we have $$\frac{\distG{u}{v}{C}}{\dist{u}{v}}\leqslant \frac{\dist{u}{c} +\dist{v}{c}}{\dist{u}{v}}\leqslant \frac{1}{\sin(\beta/2)}\leqslant \frac{1}{\sin(\pi/6)}=2<\frac{2\pi}{3}.$$
\item $c$ is outside $\lune{p}{q}$. We reduce this case to case 1 where $u=p$ or $v=q$. See Figure~\ref{diametral-lune-fig}(d). Let $u'$ and $v'$ be the intersection points of $\ray{p}{u}$ and $\ray{q}{v}$ with $\lune{p}{q}$, respectively. Let $\delta_{uv}$ be the convex chain consisting of $uu'$, $vv'$, and the portion of the boundary of $\lune{p}{q}$ that is between $u'$ and $v'$ ($\delta_{uv}$ is the bold chain in Figure~\ref{diametral-lune-fig}(d)). Observe that the length of $\pathG{u}{v}{C}$ is at most the length of $\delta_{uv}$. Thus $\frac{\distG{u}{v}{C}}{\dist{u}{v}}\leqslant\frac{|\delta_{uv}|}{\dist{u}{v}}$. Let $A$ be the convex region that is bounded by $\delta_{uv}$ and the segment $(u,v)$. Let $A'$ be obtained from $A$ by a homothetic transformation with respect to the center $c$ and scale factor $\min\{\frac{|cp|}{|cu|},\frac{|cq|}{|cv|}\}$. Assume $\frac{|cp|}{|cu|}\leqslant\frac{|cq|}{|cv|}$. Then, after this transformation $u$ lies on $p$ as shown in Figure~\ref{diametral-lune-fig}(d). Let $v''$ be the point on $qv$ where $v$ ended up after this transformation. Let $\delta_{pv''}$ (the dashed chain in Figure~\ref{diametral-lune-fig}(d)) be the chain obtained from $\delta_{uv}$ after this transformation. Since this transformation preserves ratios of distances we have $\frac{|\delta_{pv''}|}{\dist{p}{v''}}=\frac{|\delta_{uv}|}{\dist{u}{v}}$. Thus, $\frac{\distG{u}{v}{C}}{\dist{u}{v}}\leqslant \frac{|\delta_{pv''}|}{\dist{p}{v''}}$. Note that $\delta_{pv''}$ is in $\lune{p}{q}$. To obtain an upper bound on $\frac{|\delta_{pv''}|}{\dist{p}{v''}}$, we apply case 1 where $u=p$ and $v''$ plays the role of $v$. 
\end{enumerate}
\end{enumerate}

\section{Non-Uniform Rectangular Grid}
\label{grid-section}
In this section we build a plane spanner of degree three for the point set of the vertices of a non-uniform rectangular grid. In a finite non-uniform $m\times k$ grid, $\lattice$, the vertices are arranged on the intersections of $m$ horizontal and $k$ vertical lines. The distances between the horizontal lines and the distances between the vertical lines are chosen arbitrary. The total number of vertices of $\lattice$\textemdash the number of points of the underlying point set\textemdash is $n=m\cdot k$. 

\begin{figure}[H]
  \centering
\setlength{\tabcolsep}{0in}
  $\begin{tabular}{cc}
 \multicolumn{1}{m{.55\columnwidth}}{\centering\includegraphics[draft=false, width=.54\columnwidth]{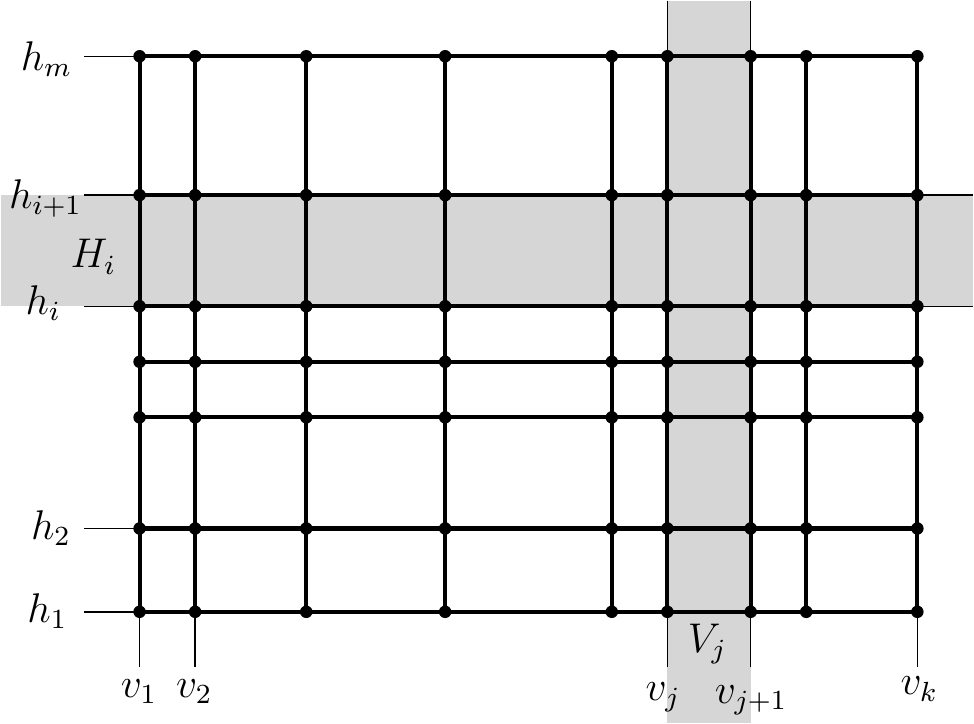}}
&\multicolumn{1}{m{.45\columnwidth}}{\centering\includegraphics[draft=false, width=.44\columnwidth]{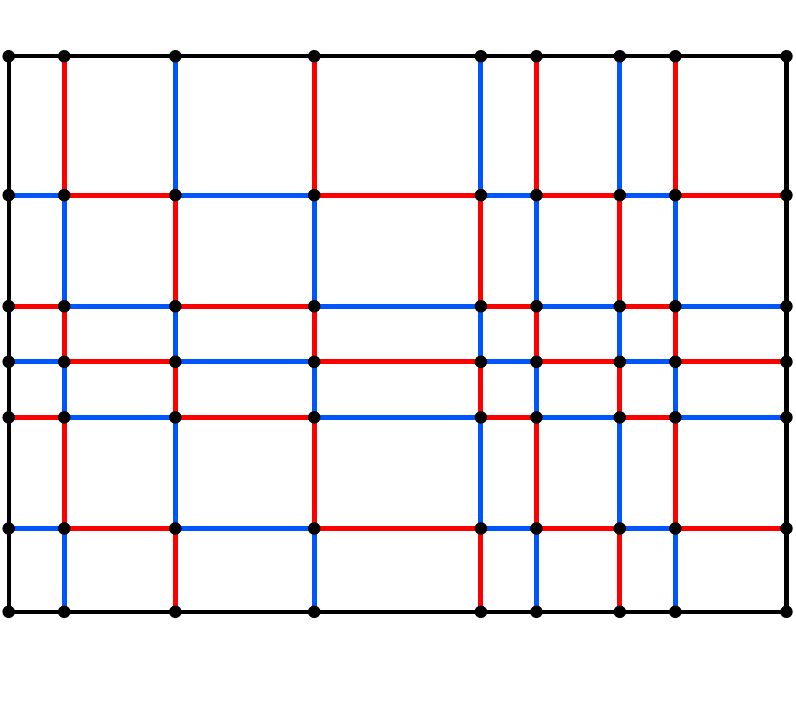}}
\\
(a) & (b)
\end{tabular}$
\caption{The grid $\lattice$: (a) horizontal and verticals slabs, and (b) red and blue staircases.}
\label{augmented-lattice-fig}
\end{figure}

If $m\in\{1,2\}$ or $k\in\{1,2\}$ then $\lattice$ is a plane spanner whose degree is at most 3 and whose stretch factor is at most $\sqrt{2}$. Assume $m\geqslant 3$ and $k\geqslant 3$. We present an algorithm that constructs a degree-3 plane spanner, $G$, for the points of $\lattice$. 

Let $h_1, \dots, h_m$ be the horizontal lines of $\lattice$ from bottom to top. Similarly, let $v_1,\dots, v_k$ be the vertical lines of $\lattice$ from left to right. For each pair $(i,j)$ where $1\leqslant i\leqslant m$ and $1\leqslant j\leqslant k$, we denote by $p_{i,j}$ the vertex of $\lattice$ that is the intersection point of $h_i$ and $v_j$. The vertices that are on $h_1$, $h_m$, $v_1$, or $v_k$ are referred to as {\em boundary vertices}; other vertices are referred to as {\em internal vertices}. The edges both of whose endpoints are boundary vertices are referred to as {\em boundary edges}; other edges are referred to as {\em internal edges}. The grid $\lattice$ consists of $m-1$ horizontal slabs (rows) and $k-1$ vertical slabs (columns). Each horizontal slab $H_i$, with $1\leqslant i<m$, is bounded by consecutive horizontal lines $h_i$ and $h_{i+1}$. Each vertical slab $V_j$, with $1\leqslant j<k$, is bounded by consecutive vertical lines $v_j$ and $v_{j+1}$. See Figure~\ref{augmented-lattice-fig}(a). For each slab we define the {\em width} of that slab as the distance between the two parallel lines on its boundary. 

We partition the set of internal edges of $\lattice$ into two sets: a set $R$ of red edges and a set $B$ of blue edges. See Figure~\ref{augmented-lattice-fig}(b). The set $R$ is the union of the following edge sets:
\begin{align}
     \notag &\{(p_{i,j},p_{i+1,j}): 2\leqslant i\leqslant m-1, 2\leqslant j\leqslant k-1 \text{, $i$ and $j$ are even} \},\\
   \notag &\{(p_{i,j},p_{i+1,j}): 2\leqslant i\leqslant m-1, 2\leqslant j\leqslant k-1 \text{, $i$ and $j$ are odd} \},\\
   \notag &\{(p_{i,j},p_{i,j+1}): 2\leqslant i\leqslant m-1, 2\leqslant j\leqslant k-1 \text{, $i$ and $j$ are even} \},\\
  \notag & \{(p_{i,j},p_{i,j+1}): 2\leqslant i\leqslant m-1, 2\leqslant j\leqslant k-1 \text{, $i$ and $j$ are odd} \}.
\end{align}

The set $B$ consists of all other internal edges. As shown in Figure~\ref{augmented-lattice-fig}(b), the edges of $R$ (and also the edges of $B$) form staircases in $\lattice$; a {\em staircase} is a maximal directed path in $\lattice$ consisting of an alternating sequence of horizontal and vertical internal edges, in which each horizontal edge is traversed from left to right and each vertical edge is traversed from top to bottom. Each internal vertex is incident on two red edges and two blue edges. Moreover, each of the staircases formed by red edges is next to either two staircases of blue edges or a stair case of blue edges and the boundary.

We know that $\lattice$ is a plane $\sqrt{2}$-spanner of degree 4. We present an algorithm that constructs, from $\lattice$, a plane graph whose degree is 3 and whose stretch factor is $3\sqrt{2}$. Let $G'$ be the graph obtained by removing all red edges from $\lattice$ as in Figure~\ref{G-G-prime-fig}(a). Since in $\lattice$ every internal vertex is incident on two red edges, in $G'$ these vertices have degree 2. Initialize $E'$ to be the empty set. The algorithm iterates over all slabs, $\{H_1, \dots,\allowbreak   H_{m-1},\allowbreak   V_1,\allowbreak  \dots,\allowbreak   V_{k-1}\}$, in a non-decreasing order of their widths. Let $S$ be the current slab. The algorithm considers the red edges in $S$ from left to right if $S$ is horizontal and bottom-up if $S$ is vertical (however, this ordering does not
matter). Let $e=(a,b)$ be the current red edge. The algorithm adds $e$ to $E'$ if both endpoints of $e$ have degree 2 in $G'\cup E'$, i.e., $\deg_{G'}(a)+\deg_{E'}(a)=2$ and $\deg_{G'}(b)+\deg_{E'}(b)=2$. See Figure~\ref{G-G-prime-fig}(b). Note that $E'$ and the edge set of $G'$ are disjoint. At the end of this iteration, let $G$ be the graph obtained by taking the union of $G'$ and $E'$. We will show that $G$ is a plane $3\sqrt{2}$-spanner of degree three for the point set of the vertices of $\lattice$.

\begin{figure}[H]
  \centering
\setlength{\tabcolsep}{0in}
  $\begin{tabular}{cc}
 \multicolumn{1}{m{.5\columnwidth}}{\centering\includegraphics[draft=false, width=.47\columnwidth]{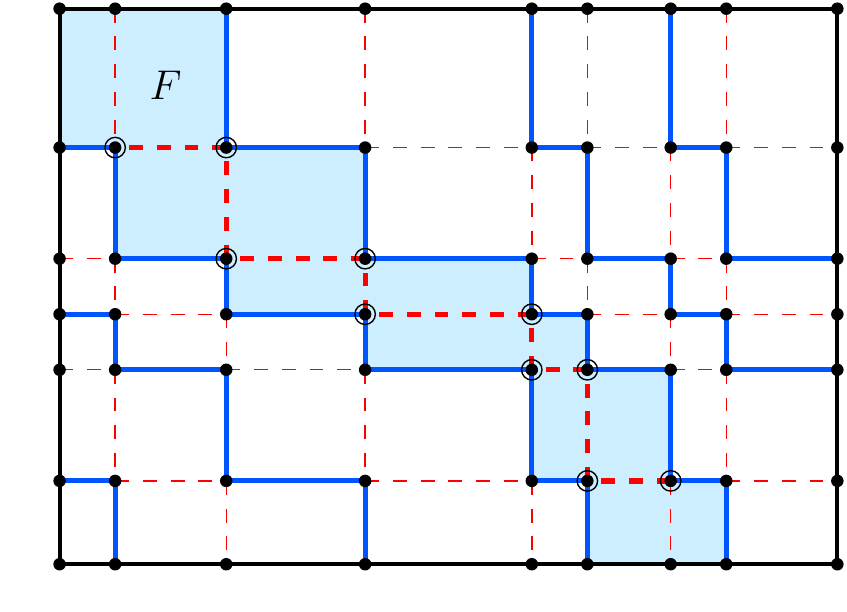}}
&\multicolumn{1}{m{.5\columnwidth}}{\centering\includegraphics[draft=false, width=.47\columnwidth]{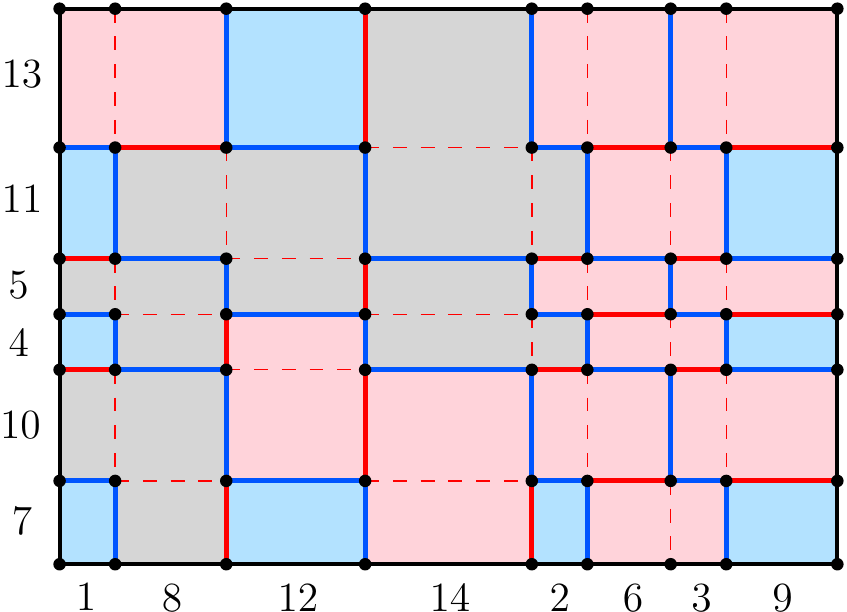}}
\\
(a) & (b)
\end{tabular}$
  \caption{(a) The graph $G'$ that is obtained by removing the red edges (the reflex vertices of the face $F$ are marked, and the path between them is highlighted), and (b) the graph $G$ that is the union of $G'$ and $E'$; the bold red edges belong to $E'$. The numbers close to the slabs show the order in which the slabs are considered.}
\label{G-G-prime-fig}
\end{figure}

The graph $G$ is plane because it is a subgraph of $\lattice$. As for the degree constraint, since we add edges only between vertices that have degree 2 in $G'\cup E'$ (at most one edge per vertex), no vertex of degree 4 can appear. Thus, $G$ has maximum degree 3. It only remains to show that $G$ is a $3\sqrt{2}$-spanner. Before that, we review some properties of $G$.

A {\em cell} of $\lattice$ refers to a region that is bounded by two consecutive horizontal lines and two consecutive vertical lines of $\lattice$. Every face of $G'$ is the union of the cells of $\lattice$ that are between two consecutive blue staircases and the boundary of $\lattice$; see Figure~\ref{G-G-prime-fig}(a). Consider one iteration of the algorithm. We refer to the red edges that are not in $G'\cup E'$ as {\em missing edges}. For every face $F$ in $G'\cup E'$ we define the set of {\em reflex vertices} of $F$ as the vertices on the boundary of $F$ that are incident on two missing edges in $F$; reflex vertices have degree 2 in $G'\cup E'$. Notice that there exists a path of missing edges that connects the reflex vertices of $F$; see Figure~\ref{G-G-prime-fig}(a). Since the algorithm connects the reflex vertices by missing edges, all faces that have more than one reflex vertex have been broken into subfaces. Thus at the end of the algorithm every face of $G$ contains at most one reflex vertex. Therefore, every face of $G$ consists of one cell, two cells, or three cells. We refer to these faces as {\em 1-cell}, {\em 2-cell}, and {\em 3-cell} faces, respectively. See the shaded faces in Figure~\ref{G-G-prime-fig}$($b$)$.

\begin{observation}
\label{G-obs}
Each face in $G$ is either a 1-cell face, a 2-cell face, or a 3-cell face.
\end{observation}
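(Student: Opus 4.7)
The plan is to prove the observation in two steps matching the sketch given in the paragraph preceding the statement. First I would verify that at termination every face of $G$ contains at most one reflex vertex; then I would show that any face with at most one reflex vertex is a 1-cell, 2-cell, or 3-cell face.

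For the first step, I would formalize the text's claim that in each face $F$ of $G'$ the reflex vertices are joined inside $F$ by a path $P_F$ of missing red edges. Every internal vertex of $P_F$ is incident to two missing red edges lying in $F$, so it has degree 2 in $G' \cup E'$ at the moment its adjacent red edges are examined by the algorithm; the greedy rule therefore adds every edge of $P_F$ to $E'$. Each addition splits off one reflex vertex into its own subface, and at termination no face contains more than one reflex vertex. The non-decreasing width ordering of slabs plays no role here; it only matters later for the stretch-factor analysis.

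For the second step, each face of $G$ is a rectilinear polygon whose boundary consists of unit edges of $\Lambda$. Writing $r$ and $c$ for its numbers of reflex and convex corners, the identity $c - r = 4$ for a simply connected rectilinear polygon combined with the previous step forces $r \in \{0,1\}$, so the face is either a rectangle or an L-shape. I would then argue, by inspecting how the two blue staircases bordering $F$ interact with the edges of $E'$ added inside $F$, that the rectangle case yields only a $1 \times 1$ cell or a $1 \times 2$ domino (any larger rectangle would contain a surviving blue-staircase edge in its interior), while the L-shape case is forced to contain exactly three cells meeting at the unique reflex vertex (any larger L would trigger an additional reflex vertex from an adjacent blue staircase).

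The main obstacle is the geometric case analysis in the second step: bounding an L-shape to exactly three cells requires a careful local description of where the blue staircases turn, what portion of each turn remains on the boundary of $F$, and where the algorithm inserted its $E'$-edges. Once this local geometry is pinned down, the cell-count bound follows directly from the corner-counting identity and a short enumeration of the possible incidences between $F$, the two bordering blue staircases, and the added red edges.
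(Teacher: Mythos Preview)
Your two-step outline mirrors the paper's justification, and your remark that the slab ordering is irrelevant for this observation is correct. But Step~1 contains a genuine error: the claim that ``the greedy rule therefore adds every edge of $P_F$ to $E'$'' is false. If $w$ is an internal vertex of $P_F$ and one of its two adjacent path edges is added first, $w$ acquires degree~$3$, and the degree test then \emph{rejects} the other adjacent edge when it is later examined. Along $P_F$ the algorithm therefore inserts roughly every second edge, not all of them; indeed, adding every edge of $P_F$ would give the internal vertices of $P_F$ degree~$4$ in $G$.

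The conclusion that each face of $G$ has at most one reflex vertex is nevertheless correct, but the argument must run by contradiction rather than by exhausting $P_F$. If a face $F'$ of $G$ had two reflex vertices, then all red staircase edges between them would be missing (otherwise $F'$ would be split), so $F'$ would contain two \emph{adjacent} reflex vertices $u,v$. Being reflex means both red edges at $u$ and at $v$ are absent from $G$, so $u$ and $v$ have degree~$2$ in $G$ and hence had degree~$2$ throughout the run; thus when the algorithm examined $(u,v)$ both endpoints satisfied the degree test and $(u,v)$ would have been added --- contradicting that it is missing. This is the actual content of the paper's phrase ``the algorithm connects the reflex vertices by missing edges.'' Your Step~2 sketch is in the right direction; the size bound can be finished by a short parity check (in any straight run of three cells the two internal dividing grid edges have opposite red/blue colour, so one is blue and lies in $G$), which rules out $1\times 3$ rectangles and L-arms of length~$\geqslant 3$ without the case analysis you anticipate.
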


\begin{lemma}
\label{missing-border}
Every missing edge that has an endpoint on the boundary of $\lattice$ is in a 2-cell face of $G$.
\end{lemma}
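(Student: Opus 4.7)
The plan is to identify the two cells of $\lattice$ adjacent to such a missing edge $e$ and to verify that every other bounding edge of these two cells belongs to $G$. From the index ranges in the definition of $R$, every endpoint of a red edge satisfies $i\geqslant 2$ and $j\geqslant 2$, so the only boundary position a red-edge endpoint can occupy is on the top row $h_m$ or the right column $v_k$. By the vertical/horizontal symmetry I treat only the top-boundary case, where $e=(p_{m-1,j},p_{m,j})$ and the parity rule in the definition of $R$ forces $m-1$ and $j$ to have the same parity.

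Next I would analyse why the algorithm refused to add $e$. The boundary vertex $b=p_{m,j}$ is incident in $\lattice$ only to $e$ and to its two horizontal boundary neighbours; both boundary edges lie in $G'$, and no further red edge can ever touch $b$ (there is no row above $m$, and horizontal red edges require a row index at most $m-1$). Hence $\deg_{G'\cup E'}(b)=2$ at every moment up to the consideration of $e$, so the rejection must come from the internal endpoint $a=p_{m-1,j}$. A short parity check shows that the two red edges incident to $a$ are $e$ and $e'=(p_{m-1,j},p_{m-1,j+1})$, and $\deg_{G'}(a)=2$; the only way $\deg_{G'\cup E'}(a)$ could have exceeded $2$ when $e$ was examined is that $e'\in E'$ already, so $e'\in G$.

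Finally I would inspect the bounding edges of the two cells $C_L$ (just left of $e$) and $C_R$ (just right of $e$). A direct parity check gives that the top and left sides of $C_L$ and the right side of $C_R$ are each either a blue internal edge or, in the extremal cases $j=2$ or $j=k-1$, a vertical boundary edge — either way, in $G'$. The bottom sides of both cells are horizontal top-boundary edges, hence in $G'$. The top side of $C_R$ is precisely $e'$, which we just argued belongs to $G$. Thus $e$ is the only absent side of $C_L\cup C_R$, and the face of $G$ whose interior contains $e$ is exactly $C_L\cup C_R$, a 2-cell face.

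The only real bookkeeping hurdle is the parity calculation that pins down which of the four neighbours of $a$ and which edges bounding $C_L,C_R$ are red versus blue, so that $e'$ is identified as the unique red edge whose prior inclusion in $E'$ can block $e$; once that is handled, the conclusion reads off directly from the two-cell structure.
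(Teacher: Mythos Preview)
Your overall strategy is sound and the key deductions are correct, but the final paragraph has its orientation reversed. Since $h_m$ is the \emph{top} line, the cells $C_L$ and $C_R$ flanking $e=(p_{m-1,j},p_{m,j})$ have their \emph{top} sides on $h_m$ and their \emph{bottom} sides on $h_{m-1}$; in particular $e'=(p_{m-1,j},p_{m-1,j+1})$ is the \emph{bottom} side of $C_R$, not the top, and the ``horizontal top-boundary edges'' are the top sides of the cells, not the bottom sides. With top and bottom swapped throughout that paragraph, your edge audit is correct: the two top sides are boundary edges, the bottom and left sides of $C_L$ and the right side of $C_R$ are blue (or boundary when $j\in\{2,k-1\}$), and the bottom side of $C_R$ is $e'\in E'\subseteq G$. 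Hence the face of $G$ meeting $e$ is exactly $C_L\cup C_R$.

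Your route is genuinely different from the paper's. The paper does not carry out a direct cell-by-cell audit or any parity bookkeeping; instead it relies on the structural fact established just before the lemma (every face of $G$ is a $1$-, $2$-, or $3$-cell face) and argues by contradiction. If a missing edge $(a,b)$ with $a$ on the boundary lay in a $3$-cell face $F$, then the reflex vertex of $F$ must be $b$ (a boundary vertex is incident to at most one internal edge, hence cannot carry two missing edges), so $b$ has degree $2$ in $G$; since $a$ also has degree $2$ in $G'\cup E'$ throughout (its only internal edge is $(a,b)$), the algorithm would have added $(a,b)$ when it was examined---contradiction. The paper's argument is shorter and uniform, but leans on the earlier observation; your argument is more explicit, pins down the actual $2$-cell face, and is self-contained (it does not need the at-most-three-cells observation at all).
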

\begin{proof}
The proof is by contradiction. Assume $(a,b)$ is a missing edge that has an endpoint on the boundary of $\lattice$ and lies in a 3-cell face $F$ of $G$. Without loss of generality assume $a$ is on the boundary. Thus, in $G$, $b$ is the reflex vertex of $F$ and has degree 2. At the moment the algorithm considers $(a,b)$, both $a$ and $b$ have degree 2, and thus, adds $(a,b)$ to $E'$. Since all edges of $E'$, including $(a,b)$, are also edges of $G$, we get a contradiction with $(a,b)$ being a missing edge.
\end{proof}

\begin{figure}[htb]
  \centering
\setlength{\tabcolsep}{0in}
  $\begin{tabular}{ccc}
 \multicolumn{1}{m{.33\columnwidth}}{\centering\includegraphics[draft=false, width=.19\columnwidth]{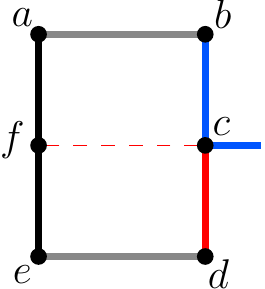}}
&\multicolumn{1}{m{.33\columnwidth}}{\centering\includegraphics[draft=false, width=.2\columnwidth]{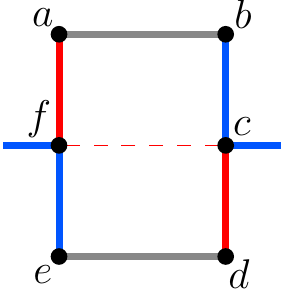}}
 &\multicolumn{1}{m{.33\columnwidth}}{\centering\includegraphics[draft=false, width=.22\columnwidth]{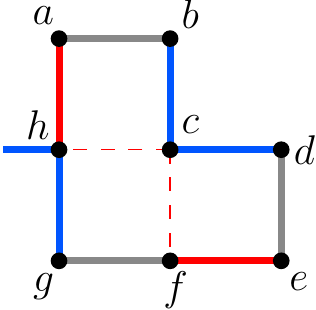}}
\\
(a) & (b)&(c)
\end{tabular}$
  \caption{Illustration of Lemmas~\ref{2-cell-lemma} and \ref{3-cell-lemma}: (a) $(c,f)$ is the missing edge of a 2-cell face and $f$ is a boundary vertex, (b) $(c,f)$ is the missing edge of a 2-cell face and none of $c$ and $f$ are boundary vertices, and (c) $(c,h)$ is a missing edge of a 3-cell face.}
\label{cell-fig}
\end{figure}

\begin{lemma}
\label{2-cell-lemma}
Let $(c,f)$ be the missing edge of a 2-cell face in $G$. Then, $\distG{c}{f}{G}\leqslant 3\dist{c}{f}$. 
\end{lemma}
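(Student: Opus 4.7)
The strategy is to exhibit, for the missing edge $(c,f)$, a 3-edge detour in $G$ that goes around one of the two cells of the 2-cell face. Let $S$ be the slab of $\lattice$ containing $(c,f)$, so that $\dist{c}{f}$ equals the width of $S$. The two cells composing the 2-cell face lie on opposite sides of $(c,f)$ inside $S$, and each lies in a slab perpendicular to $S$; let $\beta'$ and $\beta''$ be the widths of those two perpendicular slabs. Going around either cell from $c$ to $f$ uses three edges of total length $\dist{c}{f}+2\beta'$ or $\dist{c}{f}+2\beta''$. It therefore suffices to prove (i) $\min(\beta',\beta'')\leqslant\dist{c}{f}$ and (ii) the corresponding detour consists entirely of edges of $G$.

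The key observation for (i) combines the algorithm's non-decreasing slab-width processing order with the fact that $(c,f)$ is missing. Since $(c,f)$ was not added to $E'$, at the moment it was considered at least one endpoint had degree at least $3$ in $G'\cup E'$. Now an internal vertex $v$ has degree $2$ in $G'$ (its two blue edges) and exactly two red edges incident to it, one of which is $(c,f)$; so $v$ can exceed degree $2$ only if its other red edge $e_v$ was added to $E'$ earlier. A direct parity computation shows that the two red edges at any internal vertex are perpendicular to each other, hence $e_v$ lies in a slab perpendicular to $S$, and since $e_v$'s slab was processed before $S$, that slab's width is at most $\dist{c}{f}$.

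I would now split the argument into two cases matching Figure~\ref{cell-fig}. In Case~(b), both $c$ and $f$ are internal; by the observation above, at least one of them, say $c$, has $e_c\in E'$, so the corresponding perpendicular width $\beta'\leqslant\dist{c}{f}$, and the detour around the cell on $e_c$'s side consists of $e_c$ together with two blue edges of $G'$. In Case~(a), one endpoint (say $f$) lies on the boundary of $\lattice$; since red edges have at most one boundary endpoint (their definition forces one index pair into the strictly interior range), $(c,f)$ is $f$'s only red edge, so $f$'s degree in $G'\cup E'$ is $2$ throughout the algorithm. Hence $c$ must be the degree-$3$ endpoint at consideration time, which forces $c$'s other red edge into $E'$; the detour now consists of $c$'s other red edge, one blue internal edge and one boundary edge of $\lattice$, all of which lie in $G$.

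The main obstacle is a routine but necessary parity check: in each case one must confirm, using the parity constraint on the row and column indices of $c$ and $f$ imposed by $(c,f)$ being red, that the two non-red edges of the detour are actually blue (hence in $G'\subseteq G$) rather than red. This has to be carried out for both orientations (vertical and horizontal) of $(c,f)$, but is mechanical once the parities are tracked.
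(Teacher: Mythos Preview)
Your argument is correct and follows essentially the same line as the paper's: since $(c,f)$ was not added to $E'$, one endpoint had degree~$3$ at the moment it was considered; that endpoint's other red edge was therefore already in $E'$, hence lies in a slab processed no later than $S$, hence has length at most $|cf|$; going around the corresponding cell gives a $3$-edge path of length at most $3|cf|$. The case split (boundary endpoint versus both internal) and the reasoning within each case match the paper's proof.

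The ``main obstacle'' you flag---the parity check that the two non-$e_c$ edges of the detour are blue---is unnecessary, and this is the one place the paper is cleaner. The cell on $e_c$'s side is one of the two cells of the $2$-cell face $F$, so all three detour edges lie on the boundary of $F$. Since $F$ is a face of $G$, its boundary edges are edges of $G$ by definition; no colour analysis is needed for your condition~(ii). The paper exploits this directly by labelling $F=(a,b,c,d,e,f)$ as a face of $G$ and simply taking the path $(c,d,e,f)$ along $\partial F$.
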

\begin{proof}
Let $F=(a,b,c,d,e,f)$ be a 2-cell face of $G$ with the edge $(c,f)$ is missing. Without loss of generality assume that $(c,f)$ is horizontal, $f$ is to the left of $c$, and $a,b,c,d,e,f$ is the clockwise order of the vertices along the boundary of $F$; see Figures~\ref{cell-fig}(a) and~\ref{cell-fig}(b). Note that $\dist{c}{f}=\dist{a}{b}=\dist{d}{e}$, $\dist{a}{f}=\dist{b}{c}$, and $\dist{e}{f}=\dist{c}{d}$. 
We consider two cases:
\begin{itemize}
 \item One of $c$ and $f$ is a boundary vertex. Without loss of generality assume that $f$ is a boundary vertex, and thus, it has degree 2. See Figure~\ref{cell-fig}(a). Then, $(c,d)$ is a red edge that has been added during the algorithm. At the moment the algorithm considers $(c,f)$, the vertex $c$ has degree 3, because otherwise, the algorithm adds $(c,f)$ to $E'$, and hence to $G$. Thus, the edge $(c,d)$ has been considered before $(c,f)$, which implies that $|cd|\leqslant|cf|$. Thus, the length of the path $(c,d,e,f)$, i.e., $|cd|+|de|+|ef|$, is at most $3|cf|$.

 \item Both $c$ and $f$ are internal vertices. See Figure~\ref{cell-fig}(b). Then, $(a,f)$ and $(c,d)$ are red edges that have been added during the algorithm. At the moment the algorithm considers $(c,f)$, at least one of $c$ and $f$ has degree 3, because otherwise, the algorithm adds $(c,f)$ to $E'$, and hence to $G$. Without loss of generality assume that $c$ has degree 3. Thus, the edge $(c,d)$ has been considered before $(c,f)$, which implies that $|cd|\leqslant|cf|$. Thus, the length of the path $(c,d,e,f)$, i.e., $|cd|+|de|+|ef|$, is at most $3|cf|$.
\end{itemize}

\end{proof}

\begin{lemma}
\label{3-cell-lemma}
Let $(c,h)$ be a missing edge of a 3-cell face in $G$. Then, $\distG{c}{h}{G}\leqslant 3\dist{c}{h}$.
\end{lemma}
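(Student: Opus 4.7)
My plan mirrors the strategy of Lemma~\ref{2-cell-lemma}: find an explicit short path in $G$ and control its length via the non-decreasing-width order in which the algorithm processes slabs. First, I would apply Lemma~\ref{missing-border} to conclude that both $c$ and $h$ are internal vertices of $\lattice$; if either were on the boundary, $(c,h)$ would lie in a 2-cell face. Next, since a 3-cell face is L-shaped and has a unique reflex vertex whose two missing red edges split the face into its three cells, one of $c$ or $h$ is that reflex vertex. By cyclically shifting the clockwise labels $a,b,c,\dots,h$, I may assume WLOG that $c$ is the reflex vertex. Then both red edges at $c$ are missing: $(c,h)$ and a second missing edge $(c,f)$, and consequently $c$ has degree $2$ in $G$.

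The crux is to pin down a specific red edge at $h$ that the algorithm must have added to $E'$. Because $c$'s final degree in $G$ is $2$ and the algorithm never removes edges, $c$'s degree stayed equal to $2$ throughout the execution. So at the moment the algorithm considered $(c,h)$ and declined to add it, the other endpoint $h$ must already have had degree at least $3$ in $G'\cup E'$. Since $h$ is internal, its degree in $G'$ is $2$, and therefore some red edge at $h$ other than $(c,h)$ was added to $E'$ earlier. At $h$ there is exactly one other red edge, namely the vertical edge $(h,a)$ to the vertex directly above $h$ along the boundary of $F$. Hence $(h,a)\in G$, added strictly before $(c,h)$ was considered.

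Now I would convert this timing into geometry. The horizontal edge $(c,h)$ spans its vertical slab, so $|ch|$ equals the width of that slab; similarly $|ha|$ equals the width of the horizontal slab containing $(h,a)$. Because $(h,a)$ was processed before $(c,h)$ and the algorithm processes slabs in non-decreasing order of width, the slab of $(h,a)$ has width at most that of the slab of $(c,h)$, giving $|ha|\leq|ch|$.

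Finally, I would exhibit the path $c\to b\to a\to h$ in $G$. The edge $(c,b)$ is blue (the two red edges at $c$ are $(c,h)$ and $(c,f)$), hence in $G'\subseteq G$. The edge $(b,a)$ lies on the top boundary of $F$ and is either a boundary edge of $\lattice$ or a blue internal edge, hence in $G$. The edge $(a,h)=(h,a)$ is in $G$ by the previous paragraph. Since the top cell of $F$ is a rectangle, we get $|cb|=|ah|=|ha|$ and $|ba|=|ch|$, so the path length is $|cb|+|ba|+|ah|=|ch|+2|ha|\leq 3|ch|$. I expect the main subtlety to be the degree-timing argument that isolates $(h,a)$ as the edge that must have been added at $h$; once that is in place, the slab-width comparison and length computation are routine.
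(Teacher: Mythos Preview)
Your proposal is correct and follows essentially the same route as the paper: label the 3-cell face $F=(a,b,c,d,e,f,g,h)$ clockwise with $c$ the reflex vertex, use Lemma~\ref{missing-border} to ensure the relevant endpoint is internal, argue that since $c$ stays at degree~2 the other red edge $(a,h)$ at $h$ must already be in $E'$ when $(c,h)$ is examined, deduce $|ah|\leqslant|ch|$ from the slab ordering, and bound the path $(c,b,a,h)$ by $3|ch|$. The only cosmetic difference is that you justify $(c,b),(a,b)\in G$ via their colors, whereas it suffices to note they lie on the boundary of the face $F$.
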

\begin{proof}
Let $F=(a,b,c,d,e,f,g,h)$ be a 3-cell face in $G$ with the edge $(c,h)$ missing; see Figure~\ref{cell-fig}(c). Let $a,b,c,d,e,f,g,h$ be the clockwise order of the vertices along the boundary of $F$. Without loss of generality assume that $c$ is the reflex vertex of $F$, the edge $(c,h)$ is horizontal, and $h$ is to the left of $c$. Note that $\dist{c}{h}=\dist{a}{b}$ and $\dist{b}{c}=\dist{a}{h}$. Since $F$ is a 3-cell face, by Lemma~\ref{missing-border}, $f$ is not a boundary vertex. Thus, $(a,h)$ is a red edge that is added during the algorithm. At the moment the algorithm considers $(c,h)$, the vertex $h$ has degree 3, because otherwise, the algorithm adds $(c,h)$ to $E'$. Thus, $(a,h)$ has been considered before $(c,h)$, which implies that $|ah|\leqslant|ch|$. Thus, the length of the path $(c,b,a,h)$, i.e., $|cb|+|ba|+|ah|$, is at most $3|ch|$. 
\end{proof}

\begin{figure}[H]
  \centering
\includegraphics[draft=false, width=.5\columnwidth]{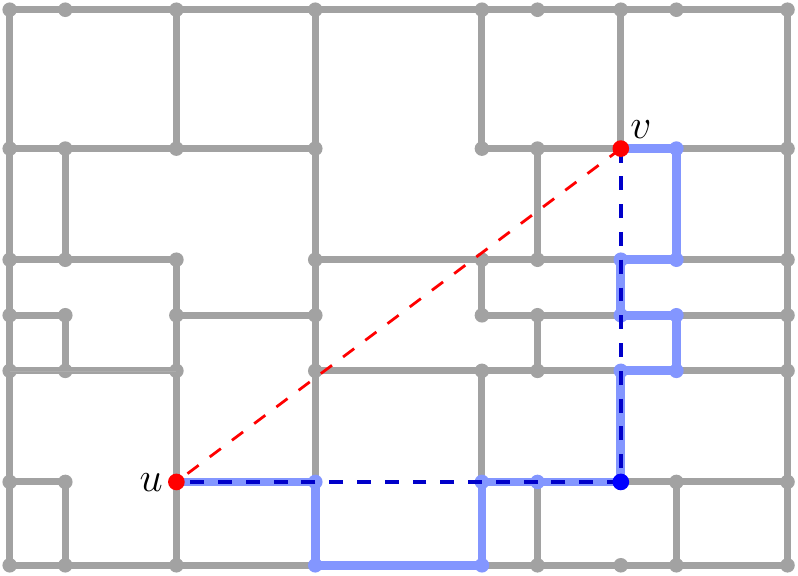}
  \caption{The path $\delta_{uv}$ is shown in dashed blue and the path $\delta'_{uv}$ is shown in bold blue.}
\label{lattice-stretch-fig}
\end{figure}

\begin{theorem}
 \label{lattice-thr}
Let $\lattice$ be a finite non-uniform rectangular grid. Then, there exists a plane spanner for the point set of the vertices of $\lattice$ such that its degree is at most 3 and its stretch factor is at most $3\sqrt{2}$.
\end{theorem}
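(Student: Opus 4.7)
The plan is to exploit the fact that $\lattice$ itself is a plane $\sqrt{2}$-spanner of degree 4 and that $G$ is obtained from $\lattice$ by deleting the set of red edges that end up in neither $G'$ nor $E'$ (the \emph{missing edges}). For any two vertices $u,v$ of $\lattice$, I will start from a shortest grid-path between them in $\lattice$ and convert it into a walk in $G$ by locally detouring around every missing edge it uses. Combined with Lemmas~\ref{2-cell-lemma} and~\ref{3-cell-lemma}, this should give the factor $3\cdot\sqrt{2}=3\sqrt{2}$.

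First, fix $u,v\in \lattice$ and let $\delta_{uv}$ be a shortest path between $u$ and $v$ in $\lattice$; since $\lattice$ is a $\sqrt{2}$-spanner of its own vertex set, we have $|\delta_{uv}|\leqslant \sqrt{2}\dist{u}{v}$. Each edge of $\delta_{uv}$ is a horizontal or vertical grid edge and is either already in $G$ or is a missing edge. Second, for every missing edge $e=(a,b)$ traversed by $\delta_{uv}$, I would replace the single step from $a$ to $b$ by a detour in $G$ whose length is at most $3|e|$. To produce such a detour, I apply Observation~\ref{G-obs} to the face $F$ of $G$ containing $e$ in its interior: $F$ is a 1-cell, 2-cell, or 3-cell face, but the 1-cell case is impossible because a 1-cell face is a single grid cell of $\lattice$ whose four boundary edges (in particular its two red sides) all belong to $G$, leaving no room for $e$ inside. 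Hence $F$ is a 2-cell or 3-cell face, and Lemma~\ref{2-cell-lemma} or Lemma~\ref{3-cell-lemma} gives a path in $G$ between $a$ and $b$ of length at most $3|e|$.

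Concatenating these detours in the order dictated by $\delta_{uv}$ yields a walk $\delta'_{uv}$ from $u$ to $v$ that lies entirely in $G$; see Figure~\ref{lattice-stretch-fig}. Because every edge $e$ of $\delta_{uv}$ contributes at most $3|e|$ to $|\delta'_{uv}|$, we obtain
\begin{equation*}
\distG{u}{v}{G}\leqslant |\delta'_{uv}|\leqslant 3|\delta_{uv}|\leqslant 3\sqrt{2}\dist{u}{v}.
\end{equation*}
Together with the earlier verification that $G$ is plane and has maximum degree at most 3, this establishes the theorem.

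The main obstacle is the intermediate claim that \emph{no} missing edge can lie inside a 1-cell face of $G$; the entire argument collapses onto Lemmas~\ref{2-cell-lemma} and~\ref{3-cell-lemma} only after this is ruled out. Fortunately, this claim is immediate from the construction: a 1-cell face of $G$ is bounded by the four sides of a single cell of $\lattice$, so any red edge crossing its interior would have to be a diagonal of that cell\textemdash of which there are none in a rectangular grid. Once this is in place, the remainder of the proof is the bookkeeping sketched above, and no further case analysis is required.
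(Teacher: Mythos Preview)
Your proof is correct and follows the same route as the paper's: take a $\sqrt{2}$-short path in $\lattice$, then replace each missing edge by the detour supplied by Lemmas~\ref{2-cell-lemma} and~\ref{3-cell-lemma}, picking up an extra factor of~$3$. The paper's proof is in fact terser than yours\textemdash it simply asserts that those two lemmas cover every edge of $\delta_{uv}$ absent from $G$, without pausing to exclude the 1-cell case\textemdash so your explicit justification of that step is a welcome addition. One stylistic remark: your ``diagonal'' argument is a bit roundabout; the cleaner observation is that a missing edge separates two cells of $\lattice$, so once it is removed those two cells lie in the same face of $G$, forcing that face to contain at least two cells.
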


\begin{proof}
If $\lattice$ has less than three rows or less than three columns, then it is a plane $\sqrt{2}$-spanner of degree 3.
Assume $\lattice$ has at least three rows and at least three columns. Let $G$ be the graph obtained by the algorithm described in this section. Then, $G$ is plane and its vertex degree is at most 3. Since $\lattice$ is a $\sqrt{2}$-spanner, between any two vertices $u$ and $v$, there exists a path $\delta_{uv}$ in $\lattice$ such that $|\delta_{uv}|\leqslant \sqrt{2}|uv|$. By Lemmas~\ref{2-cell-lemma} and \ref{3-cell-lemma}, for any edge $(a,b)$ on $\delta_{uv}$ that is not in $G$, there exists a path in $G$ whose length is at most 3 times $\dist{a}{b}$. Thus, $\delta_{uv}$ can be turned into a path $\delta'_{uv}$ in $G$ such that $|\delta'_{uv}|\leqslant 3\sqrt{2}|uv|$; see Figure~\ref{lattice-stretch-fig}. Therefore, the stretch factor of $G$ is at most $3\sqrt{2}$.
\end{proof}
\section{Final Remarks}
\label{Steiner-section}
In order to obtain plane spanners with small stretch factor, one may think of adding Steiner points\footnote{some points in the plane that do not belong to the input point set.} to the point set and build a spanner on the augmented point set.
In the $L_1$-metric, a plane 1-spanner of degree 4 can be computed by using $O(n\log n)$ Steiner points~(see~\cite{Gudmundsson2007}). Arikati~{\em et al.}~\cite{Arikati1996} showed how to compute, in the $L_1$-metric,  a plane $(1+\epsilon)$-spanner with $O(n)$ Steiner points, for any $\epsilon>0$. Moreover, for the Euclidean metric, they showed how to construct a plane $(\sqrt{2}+\epsilon)$-spanner that uses $O(n)$ Steiner points and has degree 4. 

Let $S$ be a set of $n$ points in the plane that is in general position; no three points are collinear. Let $G$ be a plane $t$-spanner of $S$. We show how to construct, from $G$, a plane $(t+\epsilon)$-spanner of degree 3 for $S$ with $O(n)$ Steiner points, for any $\epsilon>0$. Let $CP$ denotes the closest pair distance in $S$ and let $\epsilon'=\epsilon\cdot CP$. 

\begin{wrapfigure}{r}{.31\textwidth} 
\vspace{-13pt} 
\centering
\includegraphics[draft=false, width=.29\textwidth]{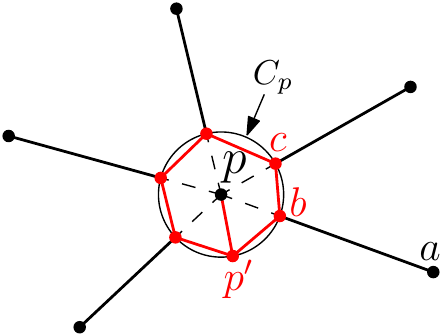} 
\vspace{-5pt} 
\end{wrapfigure}
For each point $p$ of the point set $S$, consider a circle $C_p$ with radius $\frac{\epsilon'}{\pi n}$ that is centered at $p$. Introduce a Steiner point on each intersection point of $C_p$ with the edges of $G$ that are incident on $p$. Also, introduce a Steiner point $p'$ on $C_p$ that is different from these intersection points. Delete the part of the edges of $G$ inside each circle $C_p$ (each edge $e=(p,q)$ of $G$ turns into an edge $e'$ of $G'$ with endpoints on $C_p$ and $C_q$).  Add an edge from $p$ to $p'$, and add a cycle whose edges connect consecutive Steiner points on the boundary of $C_p$. This results in a degree-3 geometric plane graph $G'$.
For each vertex of degree $k$ in $G$, we added $k+1$ Steiner points in $G'$. Since $G$ is planar, its total vertex degree is at most $6n-12$. Thus, the number of Steiner points is $7n-12$, in total (by removing Steiner vertices $p'$ and $b$ together with the edges incident on them in the above figure, keeping the edge $(a,p)$ of $G$, and adding the edge $(p,c)$ we get a different construction of $G'$ that uses $5n-12$ Steiner vertices).

A path $\delta_{uv}$ between two vertices $u$ and $v$ in $G$ can be turned into a path $\delta'_{uv}$ in $G'$ as follows. For each point $p$ in $S$ corresponding to an internal vertex of $\delta_{uv}$ incident on two edges $e_1$ and $e_2$ of $\delta_{uv}$, replace the part of $e_1$ and $e_2$ inside $C_p$ by the shorter of the two paths along $C_p$ connecting the corresponding Steiner points. Also, for each point $p\in\{u,v\}$ incident on an edge $e$ of $\delta_{uv}$ replace the part of $e$ inside $C_p$ with edge $(p,p')$ together with the shorter of the two paths along $C_p$ connecting $p'$ and the Steiner point corresponding to $e$.

Since $G$ is a $t$-spanner, $\frac{|\delta_{uv}|}{|uv|}\leqslant t$. Since the Steiner points are located at distance $\frac{\epsilon'}{\pi n}$ from points of $S$, the length of the detour caused by $C_p$ at each vertex is at most $\frac{\epsilon'}{n}$. Since $\delta_{uv}$ has at most $n$ vertices, the length of $\delta'_{uv}$ in $G'$ is at most $|\delta_{uv}|+n\cdot\frac{\epsilon'}{n}$. Thus,
$\frac{|\delta'_{uv}|}{|uv|}\leqslant\frac{|\delta_{uv}|+\epsilon'}{|uv|}=\frac{|\delta_{uv}|}{|uv|}+\epsilon\cdot\frac{CP}{|uv|}\leqslant t+\epsilon,$
is valid because the closet pair distance $CP$ is smaller than $|uv|$. 
\section*{Acknowledgement}
Some parts of this work have been done at the {\em Fourth Annual Workshop on Geometry and Graphs}, held at the Bellairs Research Institute in Barbados, March 6-11, 2016. The authors are grateful to the organizers and to the participants of this workshop. Also, we would like to thank G\"{u}nter Rote for his comments that simplified the proofs of Theorems~\ref{diametral-lune-thr} and \ref{diametral-circle-thr}, and an anonymous referee for simplifying the construction of the spanner for non-uniform grids.

\bibliographystyle{abbrv}
\bibliography{Deg-3-Spanner.bib}
\newpage
\appendix
\section{Upper bound for Function~(3)}
\label{appA}

We prove that
\begin{align}
\notag
f(x,\alpha)=\frac{1-x+\alpha}{\sqrt{(x\sin\alpha)^2+(1-x\cos\alpha)^2}} \leqslant \sqrt{1+\frac{1}{27}\left(3+2\pi\right)^2} \approx 2.04738
\end{align}
for all $0 < x \leqslant 1$, $0< \alpha\leqslant\frac{\pi}{3}$.
We rewrite~$f(x,\alpha)$ as
\begin{align}
\notag\frac{1-x+\alpha}{\sqrt{(x\sin\alpha)^2+(1-x\cos\alpha)^2}}
&\notag= \frac{1-x\sqrt{\sin^2\!\alpha+\cos^2\!\alpha}+\alpha}{\sqrt{(x\sin\alpha)^2+(1-x\cos\alpha)^2}} \\
&\notag= \frac{1-\sqrt{(x\sin\alpha)^2+(x\cos\alpha)^2}+\alpha}{\sqrt{(x\sin\alpha)^2+(1-x\cos\alpha)^2}} \\
&\notag= \frac{1-\sqrt{a^2+b^2}+\alpha}{\sqrt{a^2+(1-b)^2}} ,
\end{align}
where $a = x\sin\alpha$ and $b=x\cos\alpha$.
Note that $a^2+(1-b)^2>0$, because $\alpha>0$. We prove the following equivalent inequality:
\begin{align}
\notag
f(\alpha,a,b) = \frac{1-\sqrt{a^2+b^2}+\alpha}{\sqrt{a^2+(1-b)^2}} \leqslant \sqrt{1+\frac{1}{27}\left(3+2\pi\right)^2} \approx 2.04738
\end{align}
for all $a = b\tan\alpha$,
$0< b \leqslant \cos\alpha$,
$0< \alpha\leqslant\frac{\pi}{3}$.
We have
$$\frac{\partial}{\partial \alpha} f(\alpha,a,b) = \frac{1}{\sqrt{a^2+(1-b)^2}} .$$
Therefore,
the system
$$\frac{\partial}{\partial \alpha} f(\alpha,a,b) = \frac{\partial}{\partial a} f(\alpha,a,b) = \frac{\partial}{\partial b} f(\alpha,a,b) = 0$$
does not have a solution.
Thus,
we look at the boundary conditions.
Since $a = b\tan\alpha$,
the boundary conditions are (1) $\alpha = 0$, (2) $\alpha = \frac{\pi}{3}$, (3) $b=0$, and (4) $b = \cos\alpha$.
\begin{enumerate}
\item Since $a = b\tan\alpha$,
we have
$$f(0,a,b) = f(0,0,b) = 1 < \sqrt{1+\frac{1}{27}\left(3+2\pi\right)^2} .$$

\item Since $a = b\tan\alpha$
and using elementary calculus,
we can show that
\begin{align*}
f\left(\frac{\pi}{3},a,b\right) &= f\left(\frac{\pi}{3},\sqrt{3}\,b,b\right) = \frac{1 -2b +\frac{\pi}{3}}{\sqrt{3 b^2 + (1 - b)^2}} \\
&\leqslant f\left(\frac{\pi}{3},\sqrt{3}\,\frac{\pi-3}{4\pi+6},\frac{\pi-3}{4\pi+6}\right) = \sqrt{1+\frac{1}{27}\left(3+2\pi\right)^2}
\end{align*}
for all $0< b \leqslant \cos\left(\frac{\pi}{3}\right)$.

\item Since $a = b\tan\alpha$,
we have
$$f(\alpha,a,0) = f(\alpha,0,0) = 1 + \alpha \leqslant 1+\frac{\pi}{3} < \sqrt{1+\frac{1}{27}\left(3+2\pi\right)^2} $$
for all $0< \alpha \leqslant \frac{\pi}{3}$.

\item Since $a = b\tan\alpha$
and using elementary calculus,
we can show that
\begin{align*}
f\left(\alpha,a,\cos\alpha\right) &= f\left(\alpha,\sin\alpha,\cos\alpha\right) = \frac{\alpha}{\sqrt{2-2\cos\alpha}} \\
&\leqslant f\left(\frac{\pi}{3},\frac{\sqrt{3}}{2},\frac{1}{2}\right) = \frac{\pi}{3} < \sqrt{1+\frac{1}{27}\left(3+2\pi\right)^2}
\end{align*}
for all $0< \alpha \leqslant \frac{\pi}{3}$.
\end{enumerate}

\section{Upper bound for Function~(4)}
\label{appB}

We prove that
\begin{align}
\notag
g(x,\alpha)=\frac{2(\alpha + \cos\alpha)-\left(x+\frac{\pi}{3}\right)}{\sqrt{(x\sin\alpha)^2+(1-x\cos\alpha)^2}} \leqslant \frac{2\pi}{3} \approx 2.09440
\end{align}
for all $0 < x \leqslant 2\cos\alpha$, $\frac{\pi}{3}\leqslant \alpha\leqslant\frac{\pi}{2}$.
We rewrite~$g(x,\alpha)$ as
\begin{align*}
\frac{2(\alpha + \cos\alpha)-\left(x+\frac{\pi}{3}\right)}{\sqrt{(x\sin\alpha)^2+(1-x\cos\alpha)^2}} &=
\frac{2\left(\alpha + \frac{1}{\sqrt{\tan^2\!\alpha+1}}\right)-\left(x\sqrt{\sin^2\!\alpha+\cos^2\!\alpha}+\frac{\pi}{3}\right)}{\sqrt{(x\sin\alpha)^2+(1-x\cos\alpha)^2}} \\
&=
\frac{2\left(\alpha + \frac{1}{\sqrt{\left(\frac{x\sin\alpha}{x\cos\alpha}\right)^2+1}}\right)-\left(\sqrt{(x\sin\alpha)^2+(x\cos\alpha)^2}+\frac{\pi}{3}\right)}{\sqrt{(x\sin\alpha)^2+(1-x\cos\alpha)^2}} \\
&=
\frac{2\left(\alpha + \frac{1}{\sqrt{\left(\frac{a}{b}\right)^2+1}}\right)-\left(\sqrt{a^2+b^2}+\frac{\pi}{3}\right)}{\sqrt{a^2+(1-b)^2}} \\
&=
\frac{2\left(\alpha + \frac{b}{\sqrt{a^2+b^2}}\right)-\left(\sqrt{a^2+b^2}+\frac{\pi}{3}\right)}{\sqrt{a^2+(1-b)^2}} \\
&=
\frac{6\left(\alpha + \frac{b}{\sqrt{a^2+b^2}}\right)-\left(3\sqrt{a^2+b^2}+\pi\right)}{3\sqrt{a^2+(1-b)^2}} \\
&=
\frac{6\alpha-\pi}{3\sqrt{a^2+(1-b)^2}}-\frac{3\sqrt{a^2+b^2}-\frac{6b}{\sqrt{a^2+b^2}}}{{3\sqrt{a^2+(1-b)^2}}} \\
&=
\frac{6\alpha-\pi}{3\sqrt{a^2+(1-b)^2}}-\frac{a^2 +(1-b)^2-1}{\sqrt{a^2+b^2}\sqrt{a^2+(1-b)^2}} ,
\end{align*}
where $a = x\sin\alpha$ and $b=x\cos\alpha$. Note that $a^2+b^2>0$ and $a^2+(1-b)^2>0$, because $x>0$ and $\alpha>0$.
We prove the following equivalent inequality:
\begin{align}
\notag
g(\alpha,a,b) = \frac{6\alpha-\pi}{3\sqrt{a^2+(1-b)^2}}-\frac{a^2 +(1-b)^2-1}{\sqrt{a^2+b^2}\sqrt{a^2+(1-b)^2}} \leqslant \frac{2\pi}{3} \approx 2.09440
\end{align}
for all $a = b\tan\alpha$,
$0\leqslant b \leqslant 2\cos^2\!\alpha$,
$\frac{\pi}{3} \leqslant \alpha \leqslant\frac{\pi}{2}$.
We have
$$\frac{\partial}{\partial \alpha} g(\alpha,a,b) = \frac{2}{\sqrt{a^2+(1-b)^2}} .$$
Therefore,
the system
$$\frac{\partial}{\partial \alpha} g(\alpha,a,b) = \frac{\partial}{\partial a} g(\alpha,a,b) = \frac{\partial}{\partial b} g(\alpha,a,b) = 0$$
does not have a solution.
Thus,
we look at the boundary conditions.
Since $a = b\tan\alpha$,
the boundary conditions are (1) $\alpha = \frac{\pi}{3}$, (2) $\alpha = \frac{\pi}{2}$, (3) $b=0$, and (4) $b = 2\cos^2\!\alpha$.
\begin{enumerate}
\item Since $a = b\tan\alpha$
and using elementary calculus,
we can show that
\begin{align*}
g\left(\frac{\pi}{3},a,b\right) &= g\left(\frac{\pi}{3},\sqrt{3}\,b,b\right) = \frac{1 -2b +\frac{\pi}{3}}{\sqrt{3 b^2 + (1 - b)^2}} \\
&\leqslant g\left(\frac{\pi}{3},\sqrt{3}\,\frac{\pi-3}{4\pi+6},\frac{\pi-3}{4\pi+6}\right) = \sqrt{1+\frac{1}{27}\left(3+2\pi\right)^2} < \frac{2\pi}{3}
\end{align*}
for all $0\leqslant b \leqslant 2\cos^2\!\left(\frac{\pi}{3}\right)$.

\item When $\alpha = \frac{\pi}{2}$,
$a = b\tan\alpha$ is not well-defined.
Instead,
we write $b = a\,\cot\alpha = 0$
and $0 \leqslant a \leqslant 2\sin\alpha\cos\alpha = 0$,
from which $a = b = 0$.
We have
$$g\left(\frac{\pi}{2},0,0\right) = \frac{2\pi}{3} .$$

\item Since $a = b\tan\alpha$,
we have
\begin{align*}
g(\alpha,a,0) &= g(\alpha,0,0) = \frac{6\alpha-\pi}{3} \leqslant \frac{2\pi}{3}
\end{align*}
for all $\frac{\pi}{3} \leqslant \alpha \leqslant \frac{\pi}{2}$.

\item Since $a = b\tan\alpha$,
we have
\begin{align*}
g\left(\alpha,a,2\cos^2\!\alpha\right) &= g\left(\alpha,2\sin\alpha\cos\alpha,2\cos^2\!\alpha\right) = 2\alpha-\frac{\pi}{3} \leqslant \frac{2\pi}{3}
\end{align*}
for all $\frac{\pi}{3} \leqslant \alpha \leqslant \frac{\pi}{2}$.
\end{enumerate}
\end{document}